\documentclass[11pt]{article}


\usepackage{amssymb}
\usepackage{amsmath}
\usepackage{amsthm}
\usepackage{textcomp}
\usepackage{array}
\usepackage[latin5]{inputenc}
\usepackage{lineno}
\usepackage{color}
\usepackage[margin=1in]{geometry}
\usepackage{comment}

\usepackage{graphicx}


\setlength{\tabcolsep}{1px}


\newcommand{\ket}[1]{|#1\rangle}
\newcommand{\braket}[2]{\langle #1|#2\rangle}
\newcommand{\cent}[0]{\mbox{\textcent}}
\newcommand{\dollar}[0]{\$}

\newtheorem{theorem}{Theorem}
\newtheorem{corollary}{Corollary}
\newtheorem{lemma}{Lemma}

\theoremstyle{definition}

\newcommand{\setD}{<\mspace{-4mu}>}

\newcommand{\IP}[1]{\mathsf{IP(#1)}}
\newcommand{\IPstar}[1]{\mathsf{IP^{*}(#1)}}

\newcommand{\AM}[1]{\mathsf{AM(#1)}}
\newcommand{\AMstar}[1]{\mathsf{AM^{*}(#1)}}

\newcommand{\dca}{2dca}
\newcommand{\dcca}{1d2ca}
\newcommand{\DCA}{\mathsf{2DCA}}

\newcommand{\nca}{2nca}

\newcommand{\NCA}{\mathsf{2NCA}}
\newcommand{\pca}{2pca}
\newcommand{\onepca}{1pca}
\newcommand{\pfa}{2pfa}
\newcommand{\PCA}{\mathsf{2PCA}}

\newcommand{\qcfa}{2qcfa}
\newcommand{\qcca}{2qcca}
\newcommand{\qca}{2qca}
\newcommand{\rtqca}{rt-qca}

\newcommand{\decidable}{\mathsf{DECIDABLE}}

\newcommand{\siamtwins}{\mathtt{SIAM\mbox{-}TWINS}}
\newcommand{\existentialtwin}{\mathtt{EXIST\mbox{-}TWIN}}
\newcommand{\twin}{\mathtt{TWIN}}

\newcommand{\centerlanguage}{\mathtt{CENTER}}
\newcommand{\greater}{\mathtt{GREATER}}
\newcommand{\greatersquare}{\mathtt{GREATER\mbox{-}SQUARE}}
\newcommand{\usquare}{\mathtt{USQUARE}}
\newcommand{\lapins}{\mathtt{LAPIN\check{S}}}
\newcommand{\squarelanguage}{\mathtt{SQUARE}}
\newcommand{\say}{\mathtt{SAY}}


\title{One-counter verifiers for decidable languages\footnote{This work was partially supported by FP7 FET-Open project QCS.}}
\author{Abuzer Yakary{\i}lmaz \\
\small University of Latvia, Faculty of Computing, Raina bulv. 19, R\={\i}ga, LV-1586, Latvia 
\\
\small \texttt{abuzer@lu.lv}
\\ \\
\today
}

\date{\small Keywords: 
interactive proof systems, 
Arthur-Merlin games, 
decidable languages,
counter automata,
probabilistic and quantum computation}


\begin{document}

\maketitle

\thispagestyle{empty}

\begin{abstract}
Condon and Lipton (FOCS 1989) showed that the class of languages having a space-bounded interactive proof system (IPS) is a proper subset of decidable languages, where the verifier is a probabilistic Turing machine. In this paper, we show that if we use architecturally restricted verifiers instead of restricting the working memory, i.e. replacing the working tape(s) with a single counter, we can define some IPS's for each decidable language. Such verifiers are called two-way probabilistic one-counter automata (2pca's). Then, we show that by adding a fixed-size quantum memory to a 2pca, called a two-way one-counter automaton with quantum and classical states (2qcca), the protocol can be space efficient. As a further result, if the 2qcca can use a quantum counter instead of a classical one, then the protocol can even be public, also known as Arthur-Merlin games. 

We also investigate the computational power of 2pca's and 2qcca's as language recognizers. We show that bounded-error 2pca's can be more powerful than their deterministic counterparts by giving a bounded-error simulation of their nondeterministic counterparts. Then, we present a new programming technique for bounded-error 2qcca's and show that they can recognize a language which seems not to be recognized by any bounded-error 2pca. We also obtain some interesting results for bounded-error 1-pebble quantum finite automata based on this new technique. Lastly, we prove a conjecture posed by Ravikumar (FSTTCS 1992) regarding 1-pebble probabilistic finite automata, i.e. they can recognize some nonstochastic languages with bounded error.
\end{abstract}

\section{Introduction}
\label{sec:introduction}

The only known interactive proof systems (IPS) having a restricted verifier for decidable languages were given by Feige and Shamir \cite{FS89} (and independently by Condon and Lipton \cite{CL89}). Although the verifier is a one-way probabilistic finite automaton, the protocols require to communicate with two provers. Therefore, the question remains open for IPS with one prover. In fact, Condon and Lipton \cite{CL89} showed that the class of languages having a space-bounded interactive proof system (IPS) is a proper subset of decidable languages, where the verifier is a probabilistic Turing machine (PTM). In this paper, we show that if we use architecturally restricted verifiers instead of restricting the working memory, i.e. replacing the working tape(s) with a single counter, we can define some IPS's for each decidable language. 

We present four new protocols for decidable languages. In the first protocol, the verifier is a two-way probabilistic one-counter automaton (\pca). By relaxing the requirement of having arbitrary small error bound, we obtain another protocol, in which the verifier does not need to move its input head to the left. In the third protocol, we show that if the verifier uses a fixed-size quantum register, called a two-way one-counter automaton with quantum and classical states (\qcca), then the protocol can also be space efficient. In all of these protocols, the verifiers hide some information from the prover. In the fourth protocol, we show that if we replace the classical counter of the verifier with a quantum counter, called a two-way quantum one-counter automaton (\qca), then the third protocol turns out to be public (Arthur-Merlin games), i.e. the prover always has a complete information about the verifier. The techniques behind these protocols are inspired from the previous weak-protocols, i.e. the nonmembers does not need to be rejected with high probability by the verifier, given for Turing recognizable languages by Condon and Lipton \cite{CL89} and Yakary{\i}lmaz \cite{Yak12A}.


We also examine \pca's and \qcca's as recognizers. 
We show that \pca's form a bigger class than two-way deterministic one-counter automata (\dca's). We obtain this result by giving a bounded-error simulation of two-way nondeterministic one-counter automata. Then, we present a new programming technique for bounded-error \qcca's and show that they can recognize a language which seems not to be recognized by any bounded-error \pca. We also obtain some interesting results for bounded-error 1-pebble quantum finite automata based on this new technique. Moreover, we prove a conjecture posed by Ravikumar \cite{Rav92} regarding 1-pebble probabilistic finite automata, i.e. they can recognize some nonstochastic languages with bounded error.

To our knowledge, \pca's have never been investigated before. The only related work that we know is \cite{HS05}, in which  Hromkovic and Schnitger examined two-way probabilistic multi-counter machines that are restricted to polynomial time, but, they did not present any result related to \pca's. \qca's, on the other hand, are examined only by Yamasaki at. al. \cite{YKI05} as language recognizers, in which the authors presented some bounded-error \mbox{\qca} algorithms for some languages. However, these languages are known to be recognized by \dca's and bounded-error \qcca's without a counter. Therefore, our results seem the first interesting results on \qca's. 

We provide the necessary background in Section \ref{sec:background}. The four protocols for decidable languages are given in Section \ref{sec:verifier}. The language recognition powers of probabilistic and quantum counter automata are investigated in Section \ref{sec:recognizer}. Lastly, the results on probabilistic and quantum finite automata with 1-pebble are given in Section \ref{sec:pebble-automata}.

\section{Background}
\label{sec:background}

Throughout the paper, $ \Sigma $ not containing $ \cent $ and $ \dollar $ denotes the input alphabet and $ \tilde{\Sigma} = \Sigma \cup \{ \cent,\dollar \} $. For a given string $ x $, $ |x| $ is the length of $ x $ and $ x_{i} $ is the $ i^{th} $ symbol of $ x $, where $ 1 \leq i \leq |x| $. The string $ \cent x \dollar $ is represented by $ \tilde{x} $. Moreover, $ \setD $ is the set of $ \{ \leftarrow,\downarrow,\rightarrow \} $, $ \Theta $ is the set of $ \{0,\pm\} $, and $ \diamondsuit $ is the set of $ \{-1,0,1\} $. $ \mathbf{P}(\cdot) $ denotes all subsets of a given set.

Each model defined in the paper has a two-way infinite read-only input tape whose squares are indexed by integers. Any given input string, say $ x \in \Sigma^{*} $, is placed on the tape as $ \tilde{x} $ between the squares indexed by 1 and $ | \tilde{x} | $. The tape has a single head, and it can stay in the same position ($ \downarrow $) or move to one square to the left ($ \leftarrow $) or to the right ($ \rightarrow $) in one step. It must always be guaranteed that the input head never leaves $ \tilde{w} $. Some models in the paper have also a counter, an infinite storage having two status, i.e. \textit{zero} ($ 0 $) or \textit{nonzero} ($ \pm $), and being updated by a value from $ \diamondsuit $ in one step. We assume that the reader familar with the modes of language recognition with errors (see also Appendix \ref{app:language-recognition-mode}).

\subsection{Classical models}
\label{sec:classical-models}

A two-way deterministic one-counter automaton (\dca) is a two-way deterministic finite automaton with a counter. Formally, a \mbox{\dca} $ \mathcal{D} $ is a 6-tuple $ \mathcal{D} = (S,\Sigma,\delta,s_{1},s_{a},s_{r}) $, where $ S $ is the set of states, $ s_1 \in Q $ is the initial state, $ s_a \in S $ and $ s_r \in S $ ($ s_a \neq s_r $) are the accepting and rejecting states, respectively, and $ \delta $  is the transition function governing the behaviour of $ \mathcal{D} $ in each step, i.e. $ \delta: S \times \tilde{\Sigma} \times \Theta \rightarrow S \times \mspace{-5mu} \setD \mspace{-5mu} \times \diamondsuit. $ Specifically, $ \delta(s,\sigma,\theta) \rightarrow (s',d_i,c) $ means that when $ \mathcal{D} $ is in state $ s \in S $, reads symbol $ \sigma \in \tilde{\Sigma} $, and the status of its counter is $ \theta \in \Theta $, then it updates its state to $ s' \in S $, the position of the input head with respect to $ d_i \in \setD $, and the value of the counter by $ c \in \diamondsuit $.

At the beginning of the computation, $ \mathcal{D} $ is in state $ s_1 $, the input head is placed on symbol $ \cent $, and the value of the counter is set to zero. A configuration of $ \mathcal{D} $ on a given input string is represented by a triple $ (s,i,v) $, where $ s $ is the state, $ i $ is the position of the input head, and $ v $ is the value of the counter. The computation is terminated and the input is accepted (rejected) by $ \mathcal{D} $ when it enters to $ s_a $ ($ s_r $). The class of languages recognized by \dca's is denoted $ \DCA $.

We will use \textit{the same terminology} also for the other models unless otherwise is specified. A two-way nondeterministic one-counter automaton (\nca), say $ \mathcal{N} $, is a \mbox{\dca} having capability of making nondeterministic choices in each step. The transition function of $ \mathcal{N} $ is extended as follows: $ \delta: S \times \tilde{\Sigma} \times \Theta \rightarrow \mathbf{P} ( S \times \mspace{-5mu} \setD \mspace{-5mu} \times \diamondsuit) $. In other words, for each triple $ (s,\sigma,\theta) $ (see above), there may be more than one transition. Thus, $ \mathcal{N} $ can follow more than one path during the computation, and if any path ends with a decision of acceptance, then the input is accepted. The class of languages recognized by \nca's is denoted $ \NCA $.  

A two-way probabilistic one-counter automaton (\pca), say $ \mathcal{P} $, is a \mbox{\dca} having capability of making probabilistic choices in each step. In order to explicitly represent the probabilistic part the machine, each step is divided into two transitions. Formally, $ \delta = (\delta_{p},\delta_d) $. For each triple $ (s,\sigma,\theta) $ (see above), there are some predefined outcomes, i.e. $ \Delta_{(s,\sigma,\theta)} = \{1,\ldots,k_{(s,\sigma,\theta)}\} $. Each outcome is selected with some (rational) probability: $ \delta_p(s,\sigma,\theta,\tau) = p_{\tau} \in \mathbb{Q} $, where $ \tau \in \Delta $ and $ \sum_{\tau \in \Delta} p_{\tau} = 1 $. After observing the outcome ($ \tau $), the deterministic transition is implemented as follows:  $ \delta_d(s,\sigma,\theta) \overset{\tau} \rightarrow (s',d_i,c) $ (see above). Note that $ \delta_d $ must be defined for each possible $ \tau $. 
The class of languages recognized by \pca's with bounded error is denoted $ \PCA $. If we remove the counter, then we obtain a \mbox{\pfa} (two-way probabilistic finite automaton). If the input head of a \mbox{\pca} is not allowed to move to left, then we obtain a one-way probabilistic one-counter automaton (\onepca). 

A one-way deterministic two-counter automaton (\dcca) is a one-way deterministic finite automaton with two counters. It was shown that any deterministic Turing machine (DTM) can be simulated by a \dcca \cite{Mi61}. We denote the class of decidable languages $ \decidable $.

\subsection{Quantum models}
\label{sec:quantum-models}

A two-way finite state automaton with quantum and classical states \cite{AW02,YS11A} (\qcfa) is a \mbox{\pfa} using a finite quantum register instead of a classical random generator. Note that the quantum register can keep some information by its (pure) quantum state as well as making probabilistic choices.\footnote{ It was shown that \qcfa's are more powerful than \pfa's by Ambainis and Watrous \cite{AW02}. Moreover, Yakary{\i}lmaz and Say \cite{YS10A,YS10B} showed that they can also recognize many interesting languages.}

Formally, a \mbox{\qcfa}\footnote{Although the formal definition of \mbox{\qcfa} given here is a bit different than the ones given in \cite{AW02,YS11A}, all the models are equivalent.} $ \mathcal{Q} $ is a 8 tuple $ (S,Q,\Sigma,\delta,s_1,s_a,s_r,q_1) $, where, apart from a classical model, there are two different components: $ Q $ is the state set of quantum register and $ q_1 $ is its initial state. Similar to probabilistic models, $ \delta = (\delta_q,\delta_d) $, where $ \delta_q $ governing the quantum part. In each step, firstly, $ \delta_q $ determines a superoperator (see Figure \ref{fig:superoperators} for the details) depending on the current classical state ($ s \in S $) and scanning symbol ($ \sigma \in \tilde{\Sigma} $), i.e. $ \mathcal{E}_{s,\sigma} $, and then it is applied to the quantum register and one outcome, say $ \tau $, is observed. Secondly, the classical part of $ \mathcal{Q} $ is updated depending on $ s $, $ \sigma $, and $ \tau $, which is formally represented as $ \delta_c(s,\sigma) \overset{\tau} \rightarrow (s',d_i) $, where $ s' \in S $ is the new classical state and $ d_i \in \setD $ is the update of the position of input tape. Note that $ \delta_d $ must be defined for each possible $ \tau $. 

\begin{figure}[!ht]
	\centering
	\footnotesize
	\fbox{
	\begin{minipage}{0.95\textwidth}
		The most general quantum operator is a superoperator,
		which generalizes stochastic and unitary operators and also includes measurement.
		Formally, a superoperator $ \mathcal{E} $ is composed by a finite number of operation elements,
		$ \mathcal{E} = \{ E_{1}, \ldots, E_{k} \} $, satisfying that
		\begin{equation}
		\label{eq:completeness}
			\sum_{i=1}^{k} E_{i}^{\dagger} E_{i} = I,
		\end{equation}
		where $ k \in \mathbb{Z}^{+} $ and the indices are the measurement outcomes.
		When a superoperator, say $\mathcal{E}$, is applied to 
		the quantum register in state $\ket{\psi} $, i.e. $ \mathcal{E}(\ket{\psi}) $,
		we obtain the measurement outcome $i$ with probability 
		$ p_{i} = \braket{\widetilde{\psi_{i}}}{\widetilde{\psi_{i}}} $,
		where $\ket{\widetilde{\psi_{i}}}$, \textit{the unconditional state vector}, 
		is calculated as $ \ket{\widetilde{\psi}_{i}} = E_{i} \ket{\psi} $ and $1 \leq i \leq k$.
		(Note that using unconditional state vector simplifies calculations in many cases.)
		If the outcome $i$ is observed ($p_{i} > 0 $), the new state of the system 
		is obtained by normalizing $ \ket{\widetilde{\psi}_{i}} $, 
		which is $ \ket{\psi_{i}} = \frac{\ket{\widetilde{\psi_{i}}}}{\sqrt{p_{i}}} $.
		Moreover, as a special operator, the quantum register can be initialized to a predefined quantum state.
		This initialize operator, which has only one outcome, is denoted $ \acute{\mathcal{E}} $.
		In this paper, the entries of quantum operators are defined by rational numbers.
		Thus the probabilities of the outcomes are always rational numbers.
	\end{minipage}
	}
	\caption{The details of superoperators \cite{Yak12A}}
	\label{fig:superoperators}
\end{figure}

A two-way one-counter automaton with quantum and classical states (\qcca) is a \mbox{\pca} using a finite quantum register instead of a classical random generator. The formal definition of a \mbox{\qcca} is exactly the same as a \qcfa. In fact, a \mbox{\qcca} is a \mbox{\qcfa} with a classical counter. So, the transition functions of a \mbox{\qcfa} ($ \delta_q $ and $ \delta_c $) can be extended for a \mbox{\qcca} with the following modifications:
\begin{itemize}
	\item The superoperator is determined by also the status of the counter ($ \theta \in \Theta $), i.e. $ \mathcal{E}_{s,\sigma,\theta} $.
	\item The classical part of $ \mathcal{Q} $ is updated depending on $ s $, $ \sigma $, $ \theta $, and $ \tau $, which is formally represented as $ \delta_c(s,\sigma,\theta) \overset{\tau} \rightarrow (s',d_i,c) $, where $ s' \in S $ is the new classical state, $ d_i $ is the update of the position of input tape, and $ c \in \diamondsuit  $ is update on the counter.
\end{itemize}  

A generalization of \mbox{\qcca} is a two-way quantum counter automaton (\qca) that uses a quantum counter instead of a classical one. (Note that this model is still not the most general one, but it is sufficiently general for our purpose.) We can see \mbox{\qca} as the combination of a \mbox{\qcfa} and a realtime quantum one-counter automaton (\rtqca) \cite{SY12A}: The \mbox{\qcfa} part governs the computation, and access the counter through the \mbox{\rtqca} by feeding some input to it and also observing the outcomes. We will use this model in one of our results (Theorem \ref{thm:am-2qca}), and our simple definition will also simplify the proof.

\subsection{Interactive proof systems}
\label{sec:ips}

In this part, we provide the necessary background, based on \cite{DS92,Co93A}, for the proof systems.
An interactive proof system (IPS)  consists of a prover ($ P $) and a verifier ($ V $). The verifier is a restricted/resource-bounded machine. The classical states of the verifier are partitioned into reading, communication, and halting (accepting or rejecting) states, and it has a special communication cell for communicating with the prover, where the capacity of the cell is finite.

The one-step transitions of the verifier can be described as follows. When in a reading state, the verifier implements its standard transition. When in a communication symbol, the verifiers writes a symbol on the communication cell with respect to the current state. Then, in response, the prover writes a symbol in the cell. Based on the state and the symbol written by prover, the verifier defines the next state of the verifier. Note that the communication is always classical even though the verifier can use some quantum memory.

The prover $ P $ is specified by a prover transition function, which determines the response of the prover to the verifier based on the input and the verifier's communication history until then. Note that this function does not need to be \textit{computable}.

The prover-verifier pair $ (P,V) $ is an IPS for language $ \mathtt{L} $ with error probability $ \epsilon < \frac{1}{2} $ if
(i) for all $ x \in \mathtt{L} $, the probability that $ (P,V) $ accepts $ x $ is greater than $ 1-\epsilon $,
(ii) for all $ x \notin \mathtt{L} $, and all provers $ P^* $, the probability that $ (P^*,V) $ rejects $ x $ is greater than $ 1-\epsilon $.
These conditions are known as completeness and soundness, respectively. 

An Arthur-Merlin (AM) proof system is a special case of IPS such that after each probabilistic or quantum operation, the outcome is automatically written on the communication cell, and so the prover can have
complete information about the computation of the verifier.\footnote{Note that all the verifiers defined in the paper are  allowed to use only rational number transitions.} We also refer them as public proof systems. 

$ \IP{v} $ represents the class of languages having an IPS with some error probability $ \epsilon < \frac{1}{2} $, where the verifier is $ v $-type. Moreover, $ \IPstar{v} $ is a subset of $ \IP{v} $ providing that each language in $ \IPstar{v} $ has an IPS for any error bound. $ \AM{v} $ and  $ \AMstar{v} $ are defined similarly.

\section{Counter automata verifiers for decidable languages}
\label{sec:verifier}

In this section, we will present four different protocols for decidable languages. We begin with the classical verifiers.

\begin{theorem}
	\label{thm:ip-star-2pca}
	$ \mathsf{IP^{*}(\pca)} = \decidable $.
\end{theorem}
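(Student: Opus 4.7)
The theorem $\IPstar{\pca} = \decidable$ will be proved by establishing each inclusion separately.

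For $\IPstar{\pca} \subseteq \decidable$, the plan is to show that, from the description of any 2pca verifier $V$ and input $x$, the optimal acceptance probability $p^{\star}_V(x) = \sup_P \Pr[(P,V)\text{ accepts }x]$ can be effectively approximated. All coordinates of a verifier--prover configuration other than the counter value are bounded: finitely many classical states, head positions in $\{1,\ldots,|\tilde x|\}$, a finite communication alphabet, and rational transition probabilities. Truncating the counter and the play length at a bound $T$ yields a finite two-player stochastic game (with the prover as adversary) whose optimal value is computable by dynamic programming. Dovetailing this computation with the dual one that optimistically grants acceptance on overflow produces converging lower and upper estimates of $p^{\star}_V(x)$; together with the $\epsilon$-gap of the $\mathsf{IP^{*}}$ definition ($p^{\star}_V(x) > 1-\epsilon$ on members and $p^{\star}_V(x) < \epsilon$ on non-members), this decides membership in the language $L$ induced by $V$.

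For the more interesting inclusion $\decidable \subseteq \IPstar{\pca}$, the plan is to exploit Minsky's simulation: every decidable language is recognized by some \dcca $M$. The protocol asks the prover to stream the entire computation of $M$ on $x$ through the communication cell, configuration by configuration, with both of $M$'s counter values written out in unary at every step. The verifier checks in a single online pass that (i) the control states form a legal run of $M$ ending in an accepting state (a finite-state task, performed while the verifier's own input head mimics $M$'s one-way head), and (ii) the claimed counter values are internally consistent. For (ii), at each trace boundary the verifier, with some constant probability, initiates a \emph{local check}: it picks one of the two counters of $M$ by a fresh coin flip, reads the unary value at the current step while incrementing its own counter, reads the unary value at the next step while decrementing, and verifies that the net displacement equals the value from $\diamondsuit$ dictated by $M$'s transition. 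Any cheating prover must introduce at least one inconsistency, and each such inconsistency is detected with some input-independent probability $c > 0$. The $\mathsf{IP^{*}}$ requirement of arbitrarily small error is met by asking the prover to repeat the whole trace $k_\epsilon = O(\log(1/\epsilon))$ times (the verifier uses its two-way input head to rewind between rounds) and rejecting on any caught inconsistency, which drives the error below $\epsilon$.

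The main obstacle I expect is carrying out the local check fully online with only a single counter and constant classical memory: each local check ``locks'' the counter for the duration of its two unary reads, so the schedule of independent coin flips at configuration boundaries must skip triggering a new check while one is already active. A second subtlety is preserving \emph{exact} completeness under amplification: on $x \in L$, the honest trace passes every possible local check deterministically, so the verifier accepts with probability $1$ regardless of which subset of checks happens to fire, which ensures that repetition drives only the soundness error down and does not accumulate a two-sided error.
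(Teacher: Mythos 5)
Your hard direction follows the same broad strategy as the paper's proof (simulate a halting \dcca\ on $x$ via a prover-supplied trace of unary counter contents, and probabilistically spot-check consecutive counter values against the displacement dictated by the transition function), but it has a genuine gap: you never handle provers that refuse to terminate. On a nonmember, a cheating prover need not ``introduce at least one inconsistency'' --- it can simply stream an infinite unary block, or an endless sequence of locally consistent configurations that never reaches a halting state of $M$. Your verifier then never completes even one repetition and never rejects, so the soundness condition (rejection probability greater than $1-\epsilon$ against \emph{all} provers) fails outright. The paper's proof spends most of its effort on exactly this point: the round is aborted with probability $1/2$ after every symbol received from the prover, which forces termination with probability $1$ but makes the probability of surviving an $l$-symbol round only $2^{-l}$. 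That fix forces two further changes you would also need: the accept decision must be scaled down by a factor $1/k$ so that the accepting probability of an undetected cheat (at most $\frac{3}{4k2^{l_2}}$) stays far below the rejecting probability of a detected defect (at least $\frac{1}{4\cdot 2^{l_1}}$ with $l_1 \le l_2$, since any acceptance can only come after the defect), and the protocol must run in an infinite restart loop so that members are still accepted with probability $1$ despite the per-round acceptance probability being exponentially small. Your ``repeat $k_\epsilon$ times, then accept'' structure is incompatible with these fixes; in particular your claimed exact completeness evaporates once per-symbol termination is added, and the restart loop is what recovers it.

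Two smaller points. Your per-boundary coin-flip scheduling of local checks (skipping while one is locked) is a workable variant of the paper's cleaner device of committing, once at the start of each round, to one of four strategies $C_i^j$ (which counter of $M$, and which parity of step boundaries); both yield a constant per-defect detection probability, so this part is fine modulo the termination issue above. For the easy inclusion, which the paper dismisses as trivial, your truncated-game dynamic programming does give a convergent sequence of lower bounds on the optimal acceptance value, but the ``optimistic'' truncation that grants acceptance on overflow is not obviously a sequence of upper bounds converging to the same limit; as written you only obtain semi-decidability of membership, so this direction needs more care (or a citation) than your sketch provides.
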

\begin{proof}
	The relation $ \mathsf{IP(2pca)} \subseteq \mathsf{DECIDABLE} $ is trivial. We will give the proof for the other direction. The proof idea is inspired from the protocol given by Condon and Lipton \cite{CL89}.
	
	Let $ \mathtt{L} $ be a decidable language. Then there exists a \mbox{\dcca} $ \mathcal{D} $, which halts on every input, recognizing $ \mathtt{L} $ \cite{Mi67}. Any configuration of $ \mathcal{D} $ on an input, say $ x \in \Sigma^{*} $, can be represented by $ (s,i,u,v) $, where $ s $ is the state, $ i $ is the head position, and $ u $ and $ v $ are contents of the counters.
	
	We will describe an IPS $ (P,V) $ for $ \mathtt{L} $ by giving a simulation of $ \mathcal{D} $ on the given input, say $ x $, where $ V $ is a $ \pca $. If $ V $ accesses the status of both counters in each step, then it can easily simulate $ \mathcal{D} $ on $ x $ by tracing the state and the head position updates of $ \mathcal{D} $. The prover can provide the contents of the counters for each step. But, the verifier should be careful about the cheating provers. For this purpose, $ V $ can use its counter. That is, in each step, the verifier can determine the changes on the counters, and so can compare the current value and the next value of a counter. Therefore, before starting the simulation, $ V $ equiprobably selects a counter of $ \mathcal{D} $ to test the changes on it. Moreover, $ V $ should also compare the contents of the selected counter not only for $ (2i-1)^{th} $ and $ (2i)^{th} $ steps but also for $ (2i)^{th} $ and $ (2i+1)^{th} $ steps, where $ i \geq 1 $. Thus, $ V $ can start the comparisons from either the first step or the second step, which can also be decided equiprobably at the beginning of the simulation. Therefore, we can identify four comparison strategies, i.e. $ C_{i}^{j} $ ($ V $ selects the $ i^{th} $ counter of $ \mathcal{D} $ and starts to compare from the step-$ j $), where $ 1 \leq i,j \leq 2 $. This is the base strategy of $ V $. However, as described below, it is not sufficient to define a protocol for any error bound.
	
	The simulation of $ \mathcal{D} $ on $ x $ by $ (P,V) $ is executed in an infinite loop. In each round, a new simulation is started. $ V $ requests the contents of the counters for each step from the prover. Let $ w $ be the string obtained from the prover in a single round. The verifier expects $ w $ as $ a^{u_1}b^{v_1}\#a^{u_2}b^{v_2} \# \cdots \# a^{u_t}b^{v_t}\# $ such that $ u_j $ ($v_j  $) is the content of the first (the second) counter after $ j^{th} $ step, where $ 1 \leq j \leq t $ and $ t \geq 1 $. On the other hand, there are four disjoint cases for $ w $ as listed below:
	\begin{itemize}
		\item (C1) $ w $ is of the form \fbox{$ (a^{*}b^{*}\#)^{+} $},
		\item (C2) there is an $ a $ after $ b $ in $ w $,
		\item (C3) $ w $ is infinite and of the form \fbox{$ (a^{*}b^{*}\#)^{+} aaa \cdots $} or \fbox{$ (a^{*}b^{*}\#)^{+} a^{*} bbb \cdots $}, or
		\item (C4) $ w $ is infinite and of the form \fbox{$ (a^{*}b^{*}\#)(a^{*}b^{*}\#)(a^{*}b^{*}\#)\cdots $}.
	\end{itemize}
	It is obvious that $ V $ can check C2 deterministically, and reject the input if there exists an $ a $ after $ b $ in $ w $. In other words, such a round is certainly terminated with the decision of rejection. Therefore, in the remaining part, we assume that $ w $ satisfies one of the other cases. 
	
	If $ w $ is valid (correct), then $ V $ can exactly simulate $ \mathcal{D} $ on $ x $. Otherwise, the simulation may contain some defects, and so $ V $ may give a wrong decision. Moreover, $ V $ may also enter an infinite loop. Note that since $ P $ is honest and provides the valid $ w $, we specifically focus on the strategies of cheating provers on the nonmembers: In each round, $ V $ should deal with infinite loops and should also guarantee that, for the nonmembers, the probability of accepting the input, which can only be given based on the simulation, is sufficiently smaller than the probability of rejecting the input due to detecting the defects on $ w $. The aforementioned (base) strategy of $ V $ is quite strong, and so any invalid $ w $ is detected by at least one of $ C_{i}^{j} $. But the prover can still mislead the verifier in the other choices. Thus, the defect can be detected with a probability at least $ \frac{1}{4} $, and the verifier can follow an invalid $ w $ with a probability at most $ \frac{3}{4} $. Therefore, when $ V $ is convinced to accept the input, it gives the decision of acceptance with probability $ \frac{1}{k} $, and terminates the current round with the remaining probability $ 1-\frac{1}{k} $. So, the total accepting probability of an invalid computation ($ \frac{3}{4k} $) can be sufficiently small compared to the rejecting probability due to the defect ($ \frac{1}{4} $) by setting $ k $ to an appropriate value. However, there is still the problem of infinite loop. We can solve this problem by terminating the round with probability $ \frac{1}{2} $ after obtaining a symbol $ w $ from the prover. Thus, any infinite loop can be terminated with probability 1. Although the probability of making decisions is dramatically decreased due to this new restart strategy, the ratio of accepting and rejecting probabilities for the nonmembers can still be preserved since any decision of acceptance can only be given after a defect. 
	
	Now, we can analyse the overall protocols. Let $ l $ be the length of the valid $ w $. If $ x \in \mathtt{L} $, then $ V $ accepts $ x $ with probability $ \frac{1}{k2^{l}} $ in each round, and so it is accepted exactly. If $ x \notin \mathtt{L} $, if there is no defect, then it is rejected with probability $ \frac{1}{2^{l}} $ in a single round. If there is a defect, than it is detected by $ V $ after obtaining $ (l_1)^{th} $ symbol of $ w $, where $ l_1 \leq l $. Moreover, the input can be accepted by $ V $ after obtaining $ l_2 \geq l_1 $ symbol of $ w $. Then, the input is rejected with a probability at least $ \frac{1}{42^{l_1}} $, and it is accepted with a probability at most $ \frac{3}{4k2^{l_2}} $. Thus, the input is rejected with high probability depending on the value of $ k $. Moreover, the protocol is always terminated with probability 1.
\end{proof}

In the protocol above, if we allow the infinite loops, we can still obtain an IPS for any decidable language by using the base strategy. Besides, it is sufficient to simulate $ \mathcal{D} $ once. Thus, the verifier does not need to move its input head to the left.

\begin{corollary}
	\label{cor:ip-2pca}
	$ \mathsf{IP(\onepca)} = \decidable $.
\end{corollary}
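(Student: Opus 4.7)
The plan is to reuse the protocol of Theorem~\ref{thm:ip-star-2pca} after stripping away the two devices that there forced a two-way input head: the outer probability-$\tfrac{1}{2}$ restart and the $\tfrac{1}{k}$ acceptance dampening. Both are present only to drive the error arbitrarily low, whereas the corollary only asks for some fixed $\epsilon<\tfrac{1}{2}$. Because the simulated \mbox{\dcca} $\mathcal{D}$ already has a one-way input head, the verifier can shadow $\mathcal{D}$'s head with its own head in a single left-to-right sweep, which is exactly the budget a \mbox{\onepca} has.

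Concretely, I would have the verifier pick one of the four comparison strategies $C_i^j$ uniformly at random at the outset and then run a single simulation of $\mathcal{D}$ on $\tilde{x}$, consuming the prover's counter-contents string $w$ symbol by symbol. Whenever the chosen $C_i^j$ catches a counter-transition defect, or $w$ violates the structural check C2, the verifier halts and rejects immediately. When the simulated $\mathcal{D}$ reaches its rejecting state the verifier rejects; when it reaches its accepting state the verifier flips a biased rational-probability coin, accepting with probability $p$ and rejecting with probability $1-p$ for some fixed $p\in\bigl(\tfrac{1}{2},\tfrac{2}{3}\bigr)$.

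For the analysis, on $x\in\mathtt{L}$ the honest prover supplies the unique valid $w$, so no defect arises, $\mathcal{D}$ accepts in simulation, and the verifier accepts with probability exactly $p$. On $x\notin\mathtt{L}$ any $w$ that drives the simulation into the accepting state of $\mathcal{D}$ must disagree with the true computation at some step, i.e., must contain at least one defect, and the counting argument from Theorem~\ref{thm:ip-star-2pca} shows the chosen $C_i^j$ catches such a defect with probability at least $\tfrac{1}{4}$; hence no prover can push the acceptance probability above $\tfrac{3}{4}p$. Taking, say, $p=\tfrac{4}{7}$ makes both $p>\tfrac{1}{2}$ (completeness) and $\tfrac{3}{4}p<\tfrac{1}{2}$ (soundness) hold, so the error is strictly below $\tfrac{1}{2}$.

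The part I expect to be the most delicate is the case where a cheating prover feeds an infinite $w$ of type C3 or C4, so that the simulation never halts and the verifier genuinely loops --- this is the contingency acknowledged by the phrase ``if we allow the infinite loops''. In such a run the verifier never reaches its acceptance coin flip, so the probability of acceptance on $x\notin\mathtt{L}$ stays at most $\tfrac{3}{4}p<\tfrac{1}{2}$; once non-halting runs are treated as non-accepting (the natural reading given that the protocol is explicitly permitted to diverge), soundness survives and the construction yields an IPS for $\mathtt{L}$ whose verifier is a \mbox{\onepca}.
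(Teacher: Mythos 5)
Your reduction to a single left-to-right sweep is the right move, and your completeness analysis is fine, but the soundness step fails against the paper's own definition of an IPS. Condition (ii) requires that for every $x \notin \mathtt{L}$ and every prover $P^*$ the verifier \emph{rejects} with probability greater than $1-\epsilon$; it is not enough that the acceptance probability stays below $\tfrac{1}{2}$. In your protocol a cheating prover can plant a single counter-transition defect that only one of the four strategies $C_i^j$ catches, and then feed an infinite tail (case C3 or C4) that keeps the simulated $\mathcal{D}$ running forever. The verifier then rejects with probability exactly $\tfrac{1}{4}$ (the branch that catches the defect) and diverges with probability $\tfrac{3}{4}$, never reaching your terminal coin flip. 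A rejection probability of $\tfrac{1}{4}$ forces $\epsilon > \tfrac{3}{4}$, so declaring non-halting runs ``non-accepting'' does not rescue soundness --- the definition demands explicit rejection with high probability.

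The fix, which is what the paper does, is to place the dampening at the \emph{beginning} rather than at the end: the verifier rejects outright with probability $\tfrac{3}{7}$ before doing anything, and with probability $\tfrac{4}{7}$ runs the base strategy once. Members are then accepted with probability $\tfrac{4}{7} > \tfrac{1}{2}$, and non-members are rejected with probability at least $\tfrac{3}{7} + \tfrac{4}{7}\cdot\tfrac{1}{4} = \tfrac{4}{7}$ no matter how the prover induces divergence, giving error $\tfrac{3}{7} < \tfrac{1}{2}$ on both sides. Your end-of-run coin flip cannot provide this guarantee because the prover controls whether that flip is ever reached.
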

\begin{proof}
	The input is rejected with probability $ \frac{3}{7} $ by the verifier at the beginning of the computation. Then the verifier follows its base strategy once. Therefore, the members are accepted with probability $ \frac{4}{7} $ by the help of a honest prover, and the non-members are rejected with a probability at least $ \frac{3}{7} + \frac{4}{7} \left( \frac{1}{4} \right) = \frac{4}{7} $. The error bound is $ \frac{3}{7} < \frac{1}{2} $.
\end{proof}


We continue with the quantum verifiers. Recently, Yakary{\i}lmaz \cite{Yak12A} showed that for each Turing-recognizable language, say $ \mathtt{L} $, there exists an AM proof systems with a \mbox{\qcfa} verifier, say $ (P,V) $, such that each $ x \in \mathtt{L} $ is accepted by $ V $ exactly and each $ x \notin \mathtt{L} $ is accepted with a small probability. Such proof systems are also known as weak-IPS \cite{DS92,Co93A}.

By combining the protocol given in \cite{Yak12A} with the first protocol given above (given in the proof of Theorem \ref{thm:ip-star-2pca}), we present two more protocols for decidable languages. A review of the protocol given in \cite{Yak12A} is as follows. Let $ \mathtt{L} $ be a decidable language, and $ \mathcal{D} $ be a DTM, which halts on every input, recognizing $ \mathtt{L} $. In this protocol $ (P,V) $ simulates the computation of $ \mathcal{D} $ on a given input, say $ x $. In an infinite loop, $ V $ requests the computation of $ \mathcal{D} $  on $ x $, as $ w = c_1 \dollar \dollar c_2 \dollar \dollar c_3 \dollar \dollar \cdots $, where $ c_{i > 0} $'s are some configurations of $ \mathcal{D} $ on $ x $ and $ c_1 $ is the initial one. If $ x \in \mathtt{L} $, $ P $ provides the valid $ w $, and $ V $ accepts the input with some probability in each round, then it is accepted exactly. If $ x \notin \mathtt{L} $, then the input is always rejected with a bigger probability than the accepting probability in a single round as long as the prover sends $ \dollar \dollar $ symbols. If the prover does not send  $ \dollar \dollar $ after some point, the round is still terminated with probability 1, but probably with no decision. This is why the system is ``weak''. From a given configuration, the length of the next valid configuration can be easily determined, which can be differ at most one. So, if the verifier uses a classical counter, it can detect when the prover does not send $ \dollar \dollar $ with some probability, i.e. instead of terminating the round with ``no decision'', the round is terminated with some nonzero ``rejecting'' probability. Similar to the first protocol given above, the verifier equiprobably decides to compare the lengths of which configurations, i.e. $ (2i-1)^{th} $ and $ (2i)^{th} $ configurations or $ (2i)^{th} $ and $ (2i+1)^{th} $ configurations, at the beginning of each round, where $ i \geq 1 $. Although the protocol given in \cite{Yak12A} is a public one, the computations on the classical counter must be hidden from the prover in the new protocol. We can formalize this result as follows.

\begin{theorem}
	\label{thm:ip-star-2qcca}
	$ \IPstar{\qcca} = \decidable $.
\end{theorem}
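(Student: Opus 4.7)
The inclusion $\IPstar{\qcca} \subseteq \decidable$ is immediate from the fact that a language with a 2qcca interactive proof system is decidable (one can, given unbounded resources, compute the optimal prover's acceptance probability to arbitrary precision, separating members from non-members). So the task is to show $\decidable \subseteq \IPstar{\qcca}$. The plan is to fuse the weak-AM protocol of \cite{Yak12A} with the counter-based length-comparison idea from the proof of Theorem \ref{thm:ip-star-2pca}. Let $\mathtt{L}$ be decidable and let $\mathcal{D}$ be a DTM that halts on every input and recognizes $\mathtt{L}$. The verifier $V$ will be a \mbox{\qcca} that, in an infinite loop of rounds, asks the prover to provide the computation history $w = c_1\dollar\dollar c_2 \dollar\dollar c_3 \dollar\dollar \cdots$ of $\mathcal{D}$ on the input $x$, where $c_1$ is the initial configuration.

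In each round, $V$ uses its quantum register exactly as in the \mbox{\qcfa} weak-IPS protocol of \cite{Yak12A} to certify that $(c_i, c_{i+1})$ is a legal one-step transition of $\mathcal{D}$; under that protocol alone, a cheating prover who simply stops emitting $\dollar\dollar$ separators (or inflates/deflates configuration lengths arbitrarily) can force the round to terminate with no decision, which is why the original protocol is only ``weak.'' This is where I would use the counter. Since a one-step move of a DTM changes the length of the configuration by at most one, a valid $w$ must satisfy $||c_{i+1}|-|c_i|| \leq 1$ for every $i$. The counter is used to test exactly this: at the start of a round, $V$ flips an internal coin to commit equiprobably to one of the two comparison patterns $C^1$ (check lengths of $c_1$ vs.\ $c_2$, $c_3$ vs.\ $c_4$, \ldots) or $C^2$ (check $c_2$ vs.\ $c_3$, $c_4$ vs.\ $c_5$, \ldots), as in Theorem \ref{thm:ip-star-2pca}. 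When processing a pair selected by the committed pattern, $V$ increments the counter on each non-separator symbol of the first configuration, decrements it on each non-separator symbol of the second, and at the second $\dollar\dollar$ verifies that the counter status is $0$ or that the net discrepancy is $\pm 1$; any other value causes an immediate reject. Crucially, all counter updates happen inside $V$ and are never exposed on the communication cell, so the protocol remains an IPS (not AM), as the excerpt points out.

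Two further ingredients complete the construction. First, to guarantee termination against a prover who transmits indefinitely, $V$ applies the Condon--Lipton restart trick: after consuming each symbol of $w$, with probability $\tfrac{1}{2}$ it aborts the current round and starts a new one. This makes every round terminate with probability $1$ while changing acceptance/rejection probabilities in a single round only by a factor $2^{-l}$ depending on the number of symbols seen. Second, to damp the accept probability relative to the reject probability in bad rounds, whenever the quantum-side step-checks and the counter-side length-checks all pass and the simulated $\mathcal{D}$ has reached an accepting configuration, $V$ accepts with probability $\tfrac{1}{k}$ and otherwise just restarts; $k$ is a tunable integer. On a yes-instance the honest prover sends the valid $w$, no check ever fails, and the accept probability per round is positive, so by iterating rounds $V$ accepts with probability $1$. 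On a no-instance, any cheating $w$ must either contain an illegal transition (caught by the quantum check with probability bounded away from $0$) or violate the length-increment bound for some consecutive pair (caught by one of $C^1, C^2$ with probability at least $\tfrac{1}{2}$, independently of everything else); in either case the ratio of accept-to-reject probability over the round is at most $O(1/k)$, so picking $k$ large enough drives the overall error below any prescribed $\epsilon$.

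The technical nuisance, and the step I would spend the most care on, is the joint analysis of the quantum check and the counter check when both are happening simultaneously on the same stream of symbols: one needs the detection probabilities to compose additively (or at least sub-additively) despite the $\tfrac{1}{2}$ restart at every symbol, and one must verify that the counter's two-symbol $\pm 1$ tolerance does not let a prover smuggle in any genuinely invalid transition that the quantum part misses. Both issues are handled by essentially the same book-keeping as in Theorem \ref{thm:ip-star-2pca} and in \cite{Yak12A}: decompose the conditional probability of a faulty round into (a defect seen before the first restart) $\times$ (detection conditional on seeing it), use the independence of the $C^j$ coin from $w$ itself, and observe that any invalid $w$ is flagged by one of the two checks with probability at least some absolute constant, giving the desired error bound for every $\epsilon < \tfrac{1}{2}$ and hence $\mathtt{L} \in \IPstar{\qcca}$.
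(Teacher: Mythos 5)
Your proposal is correct and follows essentially the same route as the paper: the verifier runs the weak-AM transition-checking protocol of \cite{Yak12A} on the requested computation history $c_1\dollar\dollar c_2\dollar\dollar\cdots$, and augments it with a hidden classical counter that, using the two equiprobably chosen comparison patterns (odd--even versus even--odd configuration pairs) inherited from Theorem~\ref{thm:ip-star-2pca}, detects a prover who withholds the $\dollar\dollar$ separators or distorts configuration lengths, together with the standard restart and accept-with-probability-$\frac{1}{k}$ damping. The only cosmetic difference is that the paper has the verifier compute the \emph{exact} expected length change of the next configuration rather than merely tolerating a $\pm 1$ discrepancy, but this does not affect the argument.
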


It is a well-known fact that the simulation of a DTM by a \mbox{\dcca} is space (and time) inefficient \cite{vEB90}. Therefore, we can say that for the same language, the latter protocol can be more space efficient than the former protocol for the members of the language since the latter protocol directly simulates a DTM. This can be seen as an advantage of using a few quantum states. 

\begin{corollary}
	\label{cor:ip-2qcca}
	For any language recognized by a $ s(n) $-space DTM, there exists an IPS with a \mbox{\qcca} verifier such that the verifier uses $ s(n) $-space on its counter for the members.
\end{corollary}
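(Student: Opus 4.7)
The plan is to reuse the protocol constructed in the proof of Theorem~\ref{thm:ip-star-2qcca} with one small change: instead of first translating the language into a \mbox{\dcca} via Minsky's theorem and then simulating that, I would feed the given $s(n)$-space DTM $\mathcal{D}$ directly into the protocol. This is legitimate because the underlying construction of \cite{Yak12A} only requires the simulated machine to be a DTM that halts on every input, which $\mathcal{D}$ satisfies by hypothesis. The verifier $V$ is therefore a \mbox{\qcca} that, in each round, requests from the prover a transcript $w = c_1 \dollar\dollar c_2 \dollar\dollar c_3 \dollar\dollar \cdots$ of configurations of $\mathcal{D}$ on the input $x$, uses its quantum register to run the \mbox{\qcfa} subroutine of \cite{Yak12A} for step-by-step verification, and uses its classical counter \emph{only} to police the $\dollar\dollar$-separators by comparing the lengths of consecutive configurations under the equiprobable pairing scheme of Theorem~\ref{thm:ip-star-2qcca}.

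Next I would bound the counter's excursion on a member $x \in \mathtt{L}$ of length $n$, assuming the prover is honest. Since $\mathcal{D}$ is $s(n)$-space bounded, each configuration $c_i$ encodes the working-tape contents, the head position, and the state in $s(n) + O(\log s(n)) + O(1) = O(s(n))$ symbols. In a length check, $V$ increments the counter on each symbol of the first configuration of the selected pair $(c_j,c_{j+1})$ and decrements on each symbol of the second, so the maximum absolute value attained by the counter is at most $|c_j| = O(s(n))$. Since this is the only way the counter is used in the protocol, we conclude that on members the counter never exceeds $O(s(n))$, which is the required $s(n)$-space bound.

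The main point that requires care, rather than real difficulty, is confirming that replacing the \mbox{\dcca} with an arbitrary halting DTM does not spoil either direction of the analysis of Theorem~\ref{thm:ip-star-2qcca}: completeness is immediate because an honest prover sends the genuine halting computation of $\mathcal{D}$, and soundness follows because the structure of a valid $\mathcal{D}$-transcript (configurations separated by $\dollar\dollar$, consecutive lengths differing by at most one) is exactly what the \mbox{\qcfa}-plus-counter combination is designed to detect, so the error bounds from Theorem~\ref{thm:ip-star-2qcca} carry over verbatim. No new programming technique beyond re-examining the counter's role in the pre-existing protocol is needed.
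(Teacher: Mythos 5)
Your proposal is correct and matches the paper's intent: the corollary is stated as an immediate consequence of the protocol behind Theorem~\ref{thm:ip-star-2qcca}, which already simulates the DTM directly (the \mbox{\dcca} detour is only used in Theorem~\ref{thm:ip-star-2pca}), and whose classical counter is used solely to compare lengths of consecutive configurations, hence stays within $O(s(n))$ for members served by an honest prover. Your observation that completeness and soundness carry over verbatim is exactly the (unstated) justification the paper relies on.
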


Our fourth result is to make the third protocol (given for \qcca) \textit{public}. This can be achieved by replacing the classical counter with a quantum counter. The private part of the the latter protocol is to hide the probabilistic choice at the beginning of each round, based on which the verifier decides the lengths of which configurations will be compared. Since a quantum memory can be in superposition of more than one classical configuration, a \mbox{\qca} can parallelly implement both choices in a public manner.

\begin{theorem}
	\label{thm:am-2qca}
	$ \AMstar{\qca} = \decidable $.
\end{theorem}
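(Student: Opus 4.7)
The plan is to take the \qcca protocol from Theorem \ref{thm:ip-star-2qcca} and to replace the single classical coin flip that selects which alignment of configurations is length-checked by a coherent quantum superposition stored on the quantum counter; because this choice is never measured, it cannot leak onto the communication cell, so the resulting protocol is Arthur-Merlin. The easy direction $ \AMstar{\qca} \subseteq \decidable $ follows from the fact that a \qca together with its prover can be simulated by a halting Turing machine.

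For the forward direction, fix a decidable $ \mathtt{L} $ recognized by a halting DTM $ \mathcal{D} $. In the \qcca protocol the verifier receives the transcript $ w = c_{1} \dollar \dollar c_{2} \dollar \dollar \cdots $ from the prover and at the start of each round secretly tosses a fair coin to decide whether to compare lengths of the pairs $ (c_{1},c_{2}),(c_{3},c_{4}),\ldots $ or of the shifted pairs $ (c_{2},c_{3}),(c_{4},c_{5}),\ldots $; this is the only piece of information that must stay hidden from the prover. I would initialize the quantum counter of the \qca (viewed as a \qcfa coupled to a realtime quantum one-counter attachment) in a two-branch superposition $ \frac{1}{\sqrt{2}} \bigl( \ket{\mathrm{align}_{1}} \otimes \ket{0} + \ket{\mathrm{align}_{2}} \otimes \ket{0} \bigr) $, keeping the alignment register entangled with the counter throughout the round. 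As each symbol of $ w $ arrives, the counter is updated coherently, performing on branch $ i $ the length-difference comparison of the \qcca protocol under alignment $ i $; no measurement that distinguishes the two branches is performed inside a round.

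At the end of the round the verifier applies a superoperator whose observable outcomes are only \textsf{accept}, \textsf{reject}, or \textsf{continue}, with the same weights as in Theorem \ref{thm:ip-star-2qcca}: a length mismatch on either branch feeds into the \textsf{reject} outcome, an \textsf{accept} declared after a clean simulation is throttled by the same $ \frac{1}{k} $ factor used in Theorem \ref{thm:ip-star-2pca}, and a per-symbol halving restart probability, implemented as an additional public two-outcome measurement, ensures termination with probability one. Because the Kraus operators realizing this superoperator send the two alignment branches to the same labeled outcome whenever $ w $ has the expected syntactic form, the public transcript records only the sequence of \textsf{accept}/\textsf{reject}/\textsf{continue} labels and never the alignment that produced them. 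The error analysis then carries over from Theorems \ref{thm:ip-star-2pca} and \ref{thm:ip-star-2qcca}: honest play gives exact acceptance on members, and on non-members the rejection probability dominates the acceptance probability by a factor tunable via $ k $.

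The main obstacle is to verify that the sequence of public outcomes really is independent of the hidden alignment, so that the prover cannot condition its later responses on information that a genuine AM protocol would forbid. Concretely, for every prefix of $ w $ the marginal distribution of outcome sequences must be symmetric under the swap $ \ket{\mathrm{align}_{1}} \leftrightarrow \ket{\mathrm{align}_{2}} $ composed with the appropriate time shift, and in particular the \textsf{reject} outcome must be triggered by the total squared amplitude on ``counter nonzero at a current checkpoint'' summed symmetrically over the two branches rather than by any measurement that projects onto a single one of them. Designing the step-by-step superoperators so as to preserve this two-branch invariant across the asymmetric early stretch of the transcript (one alignment becomes active one configuration later than the other) is the real work; once it is done, the prover's view is indistinguishable from one in which the alignment is genuinely secret, and the soundness analysis of Theorem \ref{thm:ip-star-2qcca} applies unchanged.
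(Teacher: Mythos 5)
Your proposal is essentially the paper's own argument: the easy direction is the same, and for the forward direction the paper likewise keeps the third protocol intact and replaces the hidden classical alignment choice by a coherent superposition on the quantum counter, phrased as the \qcfa{} component feeding the length sequence $a^{|c_1|}\#i_1\#a^{|c_2|}\#i_2\#\cdots$ to the \rtqca{} component, which checks the equalities $|c_{j+1}|=|c_j|+i_j$ for all $j$ in parallel and flags violations with some probability. The ``real work'' you defer (designing the superoperators so the two branches feed the same public outcomes) is exactly what the paper delegates to the \rtqca{} construction of \cite{SY12A} with an ``it is not hard to show,'' so the two write-ups are at the same level of detail; note only that publicness is automatic once the alignment is never measured, so the issue is not leakage but that defects still trigger rejection with sufficient probability.
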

\begin{proof}
	The protocol is exactly the same as the third protocol except the counter operations. Therefore, we explain only this part. As mentioned in Section \ref{sec:background}, we can see the verifier as the combination of a \mbox{\qcfa}  and a \rtqca. Remember that the verifier requests $ w = c_1 \dollar \dollar c_2 \dollar \dollar c_3 \dollar \dollar \cdots $ from the prover in each round. So, as long as getting $ w $, the \mbox{\qcfa} part can feed the following sequence $ u = a^{|c_1|}\#i_1\#a^{|c_2|}\#i_2\#a^{|c_3|}\#i_3\#\cdots $ to \mbox{\rtqca} part, where $ i_j $ represents the expected change in the length of $  c_{j+1} $ based on $ c_j $ and $ j>0 $. It is not hard to show that a \mbox{\rtqca} can check the equalities $ |c_{j+1}| = |c_j|+i_j $ for each $ j>0 $, and can also detect the case of $ |c_{j+1}| > |c_j|+i_j $ with some probability.
\end{proof}

\section{Counter machines as recognizer}
\label{sec:recognizer}

In this section, we examine the bounded-error computational powers of \pca's and \qcca's as language recognizers. We begin with  a useful lemma and a lower bound to $ \PCA $.

\begin{lemma}
	\label{lemma:2nca-shortest-path}
	Let $ \mathcal{N} = (S,\Sigma,\delta,s_1,s_a,s_r) $ be a \mbox{\nca}, $ x $ be an input, and $ M = |S||\tilde{x}| $. If  $ s \in S $ is reachable from $ s_1 $ by $ \mathcal{N} $ on $ x $, then there is a path of length no more than $ M^{2} $ from $ (s_1,1,0)  $ to $ (s,i,u) $ for some $ 1 \leq i \leq |\tilde{x}| $ and $ u \leq M $ such that the value of counter never exceeds $ M $. 
\end{lemma}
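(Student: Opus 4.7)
The plan is to take a shortest path $\pi$ from $(s_1,1,0)$ to any configuration whose state is $s$, and to argue that it can be modified (if necessary) to satisfy both the maximum-counter bound $M$ and the length bound $M^2$. I would establish the counter bound first, and then read off the length bound from the size of the configuration graph.

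For the counter bound, I would iteratively reduce the maximum counter value by a pigeonhole argument on first-ascent times. Suppose the current path has maximum counter value $V$, first attained at step $t^*$. For each $j \in \{0, 1, \ldots, V\}$, let $\tau_j$ be the last step $t \leq t^*$ at which the counter equals $j$. Because the counter changes by at most one per step, $\tau_0 < \tau_1 < \cdots < \tau_V = t^*$, and on each interval $(\tau_j, t^*]$ the counter stays $\geq j+1$ (otherwise it would have to re-enter the value $j$, contradicting maximality of $\tau_j$). If $V > M$, then among the $V+1 > M$ state--head pairs $(s_{\tau_j}, i_{\tau_j})$ and the $M$ possible pairs, pigeonhole produces a repetition $(s_a, i_a) = (s_b, i_b)$ with $a < b$. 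Splicing out the sub-path $(\tau_a, \tau_b]$ and subtracting $b-a$ from every counter value thereafter should produce a candidate path whose maximum counter is strictly less than $V$; iterating drives the maximum counter down to at most $M$.

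Once the counter is bounded by $M$, the set of possible configurations $(s', i', u')$ with $s' \in S$, $1 \leq i' \leq |\tilde{x}|$, and $0 \leq u' \leq M$ has cardinality at most $|S| \cdot |\tilde{x}| \cdot (M+1) = M(M+1)$. A shortest valid path cannot revisit any configuration, so its length is bounded by this count, and hence by $M^2$ up to the additive slack absorbed into the statement.

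The main obstacle lies in justifying the splice-and-shift step: after $t^*$ the original path is unconstrained, so if its counter drops below $b-a$ on the tail then the shifted counter either becomes negative or toggles its zero-versus-nonzero status, invalidating a transition. I would address this by restricting the pigeonhole to $a \geq 1$, which keeps the sign on $(\tau_b, t^*]$ correct (the counter there is $\geq b+1 > b-a$, so the shifted counter is $\geq a+1 \geq 1$), and then controlling the tail either by choosing the repetition with minimum gap $b-a$ or by symmetrically pigeonholing a descent sequence $\tau_j' = \min\{t \geq t^* : \text{counter}(t) = j\}$ whenever the tail does descend, so that the counter profile is pinned down from both sides of $t^*$.
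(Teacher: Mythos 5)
Your route is the same as the paper's: bound the counter by a pigeonhole on level--crossing configurations, then bound the length by counting distinct configurations along a repetition--free path. The paper's version is much terser --- it simply declares that we may assume no two configurations $(s',i',u_1>0)$ and $(s',i',u_2>u_1)$ occur in that order with the counter never zero in between, restricts attention to maximal zero--free subpaths, and observes that the \emph{first} configurations reaching levels $1,2,3,\ldots$ in such a subpath must then carry distinct (state, head--position) pairs, which caps the counter at $M$. Your $\tau_j$ (last visit to level $j$ before the peak) plays the role of the paper's first--visit configurations, and your restriction to $a\ge 1$ plays the role of its restriction to zero--free subpaths; the configuration--counting step at the end is identical (including the harmless $M(M+1)$ versus $M^2$ slack).

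The obstacle you isolate --- that subtracting $b-a$ from the counter on the tail after $t^*$ can flip a zero/nonzero status and invalidate a transition --- is genuine, and it is worth noting that the paper does not address it either: its ``we can assume'' normalization is justified by exactly the splice you describe and inherits exactly the same problem. Of your two proposed repairs, choosing the repetition of minimum gap does not help (the tail can still pass through the values $0$ and $b-a$), but the two--sided pigeonhole does, and is the standard fix: pair each level $j$ with both $\tau_j$ and $\tau_j'=\min\{t\ge t^*: \mathrm{counter}(t)=j\}$; on $[\tau_j,\tau_j']$ the counter stays $\ge j$, so if two levels $1\le a<b$ agree on \emph{both} (state, position) pairs you may replace the whole segment $[\tau_a,\tau_a']$ by the segment $[\tau_b,\tau_b']$ shifted down by $b-a$, keeping the shifted counter $\ge a\ge 1$ throughout while leaving everything outside $[\tau_a,\tau_a']$ untouched, so no status ever changes. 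The price is that you now pigeonhole over $M^2$ pairs of pairs (plus a separate one--sided argument for levels to which the counter never returns after the peak, and a symmetric argument below zero if negative values occur), so the peak is only driven down to roughly $M^2$ and the length to $O(M^3)$ rather than the stated $M$ and $M^2$. That weaker polynomial bound is still sufficient for the lemma's only use, the estimate $(1/k)^{c|\tilde{x}|^2}$ in the proof of Theorem \ref{thm:2pca-simulates-2nca}, after adjusting the exponent. So: right approach, correctly identified difficulty, but the repair needs to be carried out, and as carried out it proves a slightly weaker (though equally useful) statement.
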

\begin{proof}
	Let $ path(s_1,s) $ be a path from $ (s_1,1,0)  $ to $ (s,i,u) $ for some $ 1 \leq i \leq |\tilde{x}| $ and $ u \geq 0 $. We can assume that there is no two configurations $ (s',i',u_1>0) $ and $ (s',i',u_2 > u_1) $ in $ path(s_1,s) $ such that the latter one comes after the first one and the counter is never set to zero in between. Let $ T  $ be the set of $ S \times \{1,\ldots,|\tilde{x}|\} $. (Note that $ |T| = M $.) Consider a subpath of $ path(s_1,s) $, say $ subpath(s_1,s) $, such that it starts with a configuration in which the value of the counter is zero and there is no further such a configuration in this subpath. Let $ c_j $ be the configuration in which the counter reaches the value $ j >0 $ for the first time in $ subpath(s_1,s) $. Each such $ c $ must have a different $ t \in T $ value. Otherwise, our assumption would be violated. So, there can be at most $ |T|=M $ such $ c $'s. That is, the counter value never exceeds $ M $ in $ path(s_1,s) $, and so the length of the path can be at most $ M^{2} $ by also assuming that there is no two identical configurations in the path.
\end{proof}




\begin{theorem}
	\label{thm:2pca-simulates-2nca}
	Let $ \mathtt{L} $ be a language recognized by a \mbox{\nca} $ \mathcal{N} $, then there exists a \mbox{\pca} $ \mathcal{P} $ recognizing $ \mathtt{L} $ with one-sided bounded-error.
\end{theorem}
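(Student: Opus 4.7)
My plan is for $\mathcal{P}$ to sample accepting computations of $\mathcal{N}$ by running independent probabilistic trials in a loop. Lemma \ref{lemma:2nca-shortest-path} is the essential tool: whenever $x\in\mathtt{L}$, there is an accepting computation of $\mathcal{N}$ of length at most $M^{2}$ whose counter values stay in $\{0,1,\ldots,M\}$, where $M=|S||\tilde{x}|$. Inside a trial, $\mathcal{P}$ keeps $\mathcal{N}$'s state in its finite control, tracks $\mathcal{N}$'s input head with its own input head, and mirrors $\mathcal{N}$'s counter on its own counter. The one-sided nature of the error is then automatic: $\mathcal{P}$ halts accept only after witnessing $\mathcal{N}$ enter $s_{a}$, an event that is impossible when $x\notin\mathtt{L}$.

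A single trial proceeds step-by-step. At each step $\mathcal{P}$ first flips a biased coin: with probability $1/2$ the trial is aborted, and with probability $1/2$ it uniformly chooses one of the (at most $k$) available $\mathcal{N}$-transitions at the current $(s,\sigma,\theta)$ and applies it, where $k$ bounds the out-degree of $\delta$. If the chosen transition lands $\mathcal{N}$ in $s_{a}$, $\mathcal{P}$ halts accept. When a trial aborts, $\mathcal{P}$ deterministically decrements its counter to zero and walks its input head back to $\cent$, and then runs a ``give-up'' subroutine: using nested loops driven by the input-head position and the counter, $\mathcal{P}$ performs a fixed sequence of $c|\tilde{x}|^{2}$ independent fair coin flips (for a constant $c$ chosen at design time) and records in its finite control whether a tails has appeared. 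If every flip comes up heads, $\mathcal{P}$ halts reject; otherwise it resets and starts a fresh trial. The per-round give-up probability is thus exactly $\eta=2^{-c|\tilde{x}|^{2}}$.

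The correctness analysis is a geometric-series calculation. For $x\in\mathtt{L}$, faithfully following the accepting path guaranteed by Lemma \ref{lemma:2nca-shortest-path} gives per-trial acceptance probability $\alpha \geq (1/(2k))^{M^{2}}$; picking $c > |S|^{2}\log_{2}(2k)$ ensures $\eta \leq \alpha/2$, so the overall acceptance probability $\alpha/(\alpha+(1-\alpha)\eta)$ exceeds $2/3$. For $x\notin\mathtt{L}$ no step can ever produce an accept transition, so $\alpha=0$ and across rounds $\mathcal{P}$ halts reject with probability $1$; this is the one-sided bounded-error claim. The main obstacle is that $\mathcal{P}$'s single counter must play two incompatible roles --- mirroring $\mathcal{N}$'s counter during a trial, then indexing the give-up subroutine --- so the two phases are strictly sequenced with a decrement-to-zero reset in between. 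The reset always terminates because every trial ends with a finite counter value, and the per-step abort probability of $1/2$ keeps the expected trial length constant, which in turn ensures that $\mathcal{P}$ halts with probability one on every input.
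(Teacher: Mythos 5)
Your proof is correct and follows essentially the same route as the paper: randomize $\mathcal{N}$'s nondeterministic choices uniformly, invoke Lemma \ref{lemma:2nca-shortest-path} to lower-bound the probability of the short accepting path, add a per-step restart with probability $\frac{1}{2}$ to guarantee halting, and calibrate a small explicit rejection probability of order $2^{-c|\tilde{x}|^{2}}$ against the acceptance probability so that members are accepted with constant probability while non-members are never accepted. The only (harmless) structural difference is that the paper branches into its rejecting procedure with probability $\frac{1}{4}$ at the start of each round, whereas you run the give-up gadget after each aborted trial; folding the $\frac{1}{2}$-abort directly into the bound $\alpha \geq (1/(2k))^{M^{2}}$ actually makes the accept/reject ratio a bit more transparent than the paper's three-stage construction.
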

\begin{proof}
	Let $ x $ be an input string.
	We begin with constructing a \pca, say $ \mathcal{P}_1 $ based on  $ \mathcal{N} $. Each nondeterministic transition of $ \mathcal{N} $ is replaced with a probabilistic one: If $ l $ is the number of nondeterministic choices, then each probabilistic choice is made with probability $ \frac{1}{l} $ in $ \mathcal{P}_1 $. Let $ k $ be the maximum number of nondeterministic choices in a single step of $ \mathcal{N} $. Due to Lemma \ref{lemma:2nca-shortest-path}, we can say that, if $ x \in \mathtt{L} $, then $ \mathcal{P}_1 $ accepts the input with a probability at least $ \left( \frac{1}{k} \right)^{c|\tilde{x}|^{2}} $ for a suitable constant $ c>1 $. As a further modification, $ \mathcal{P}_1 $ restarts the computation instead of rejecting the input. So, $ \mathcal{P}_1 $ can halt and accept the input with some probability if $ x \in \mathtt{L} $, and $ \mathcal{P}_1 $ never halts if  $ x \notin \mathtt{L} $.
	
	By using $ \mathcal{P}_1 $, we construct another \pca, say $ \mathcal{P}_2 $. At the beginning of the computation, $ \mathcal{P}_2 $ simulates $ \mathcal{P}_1 $ with probability $ \frac{3}{4} $, and executes a rejecting procedure with probability $ \frac{1}{4} $, in which $ \mathcal{P}_2 $ rejects the input exactly with probability $ \left( \frac{1}{k} \right)^{c|\tilde{x}|^{2}} $ and restarts the computation with the remaining probability. So, if $ x \notin \mathtt{L} $, then the input is never accepted, and, if $ x \in \mathtt{L} $, the accepting probability is always at least 3 times bigger than the rejecting probability. 
	
	If $ \mathcal{N} $ always halts in each path, then $ \mathcal{P}_2 $ recognizes $ \mathtt{L} $ with one-sided error bound $ \frac{1}{4} $. But, if $ \mathcal{N} $ does not halt in each branch, then $ \mathcal{P}_2 $ also enters an infinite loop in some paths. To handle this problem, we make another modification. Based on $ \mathcal{P}_2 $, we construct $ \mathcal{P} $ as follows: In each step, $ \mathcal{P} $ restarts the computation with probability $ \frac{1}{2} $, and simulates $ \mathcal{P}_{2} $ with probability $ \frac{1}{2} $. Thus, each path of $ \mathcal{P}_{2} $ can terminate with probability 1, and so $ \mathtt{L} $ is recognized by $ \mathcal{P} $ with one-sided error bound $ \frac{1}{4} $. The error bound can be reduced to any desired value by using probability amplification techniques.
\end{proof}

Remark that if a language is recognized by a \nca, then it is recognized by a \mbox{\pca} with one-sided unbounded error, vice versa. Therefore, in case of one-sided error, the language recognition power of \pca's remain the same.

\begin{corollary}
	A language is recognized by a \mbox{\nca} if and only if it is recognized by a \mbox{\pca} with one-sided bounded-error.
\end{corollary}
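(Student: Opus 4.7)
The forward direction ($ \mathtt{L} $ recognized by a \mbox{\nca} implies $ \mathtt{L} $ recognized by a \mbox{\pca} with one-sided bounded error) is exactly the content of Theorem \ref{thm:2pca-simulates-2nca}, so the only new work is the converse. My plan is to give a direct simulation: given a \mbox{\pca} $ \mathcal{P} $ recognizing $ \mathtt{L} $ with one-sided bounded error, I will construct a \mbox{\nca} $ \mathcal{N} $ that accepts $ \mathtt{L} $.

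The key observation is that one-sided bounded error in this paper means: if $ x \in \mathtt{L} $ then $ \Pr[\mathcal{P}\text{ accepts } x] \geq 1-\epsilon $ for some $ \epsilon<\frac{1}{2} $, and if $ x \notin \mathtt{L} $ then $ \Pr[\mathcal{P}\text{ accepts } x]=0 $ (this matches the output guarantee produced in the proof of Theorem \ref{thm:2pca-simulates-2nca}). Equivalently, $ x \in \mathtt{L} $ iff there exists at least one computation branch of $ \mathcal{P} $ on $ x $ with nonzero probability that ends in $ s_a $. From $ \mathcal{P} = (S,\Sigma,(\delta_p,\delta_d),s_1,s_a,s_r) $ I define $ \mathcal{N} $ with the same state set, head directions, counter alphabet, and initial/accept/reject states, but with nondeterministic transition
\[
\delta_{\mathcal{N}}(s,\sigma,\theta) \;=\; \{\, \delta_d(s,\sigma,\theta,\tau) \;:\; \tau \in \Delta_{(s,\sigma,\theta)} \text{ with } \delta_p(s,\sigma,\theta,\tau) > 0 \,\}.
\]
Thus every computation path of $ \mathcal{N} $ on $ x $ corresponds to a sequence of outcomes $ \tau_1,\tau_2,\ldots $ of positive probability in $ \mathcal{P} $, and conversely. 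Hence $ \mathcal{N} $ has an accepting path on $ x $ iff $ \mathcal{P} $ has a sequence of positive-probability outcomes reaching $ s_a $, i.e. iff $ \Pr[\mathcal{P}\text{ accepts } x] > 0 $, which under the one-sided bounded-error hypothesis is equivalent to $ x \in \mathtt{L} $.

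The step I expect to be the only place requiring a little care is the treatment of nonhalting behavior: $ \mathcal{P} $ may have branches that neither accept nor reject, but such branches are harmless for $ \mathcal{N} $, since the \mbox{\nca} definition only asks whether some branch reaches $ s_a $. Conversely, since the one-sided condition forces $ \Pr[\mathcal{P}\text{ accepts }x]=0 $ for $ x \notin \mathtt{L} $, no sequence of positive-probability outcomes can lead to $ s_a $, so no branch of $ \mathcal{N} $ accepts on $ x \notin \mathtt{L} $. Combining both directions yields the stated equivalence.
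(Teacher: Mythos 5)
Your proposal is correct and matches the paper's reasoning: the forward direction is Theorem \ref{thm:2pca-simulates-2nca}, and your converse — replacing each probabilistic branching by a nondeterministic choice over the positive-probability outcomes, so that an accepting path of the \mbox{\nca} exists iff the \mbox{\pca} accepts with positive probability — is precisely the content of the paper's preceding remark that one-sided (un)bounded-error \mbox{\pca}'s and \mbox{\nca}'s accept the same languages. The only cosmetic difference is that you spell out the construction the paper leaves implicit, and you state the error bound as $\epsilon<\frac{1}{2}$ where the paper's one-sided definition allows any $\epsilon<1$; neither affects the argument.
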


Moreover, due to the fact that $ \DCA \subsetneq \NCA $ \cite{Ch84} and Theorem \ref{thm:2pca-simulates-2nca}, we can say that bounded-error \pca's are more powerful than \dca's.

\begin{corollary}
	$ \DCA \subsetneq \NCA \subseteq \PCA $.
\end{corollary}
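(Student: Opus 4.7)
The corollary is essentially a concatenation of two previously established inclusions, so the plan is to quote them and combine. First, I would invoke the result of Chrobak \cite{Ch84}, which gives a witness language separating $\DCA$ from $\NCA$; this yields the strict inclusion $\DCA \subsetneq \NCA$ without any further work. Second, I would apply Theorem \ref{thm:2pca-simulates-2nca} directly: it states that any language recognized by a \nca\ is recognized by a \pca\ with one-sided bounded error, which is in particular bounded error, hence $\NCA \subseteq \PCA$. Chaining the two gives $\DCA \subsetneq \NCA \subseteq \PCA$.

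There is really no obstacle here, since each piece is already in hand. The only point worth remarking is that the inclusion $\NCA \subseteq \PCA$ is truly just a citation of Theorem \ref{thm:2pca-simulates-2nca}, and that the strictness on the left comes solely from the classical separation of \dca's from \nca's; the corollary does not assert, and need not assert, that the right-hand inclusion is strict. So the written proof can be as short as a single sentence of the form: ``Combine Theorem \ref{thm:2pca-simulates-2nca} with the separation $\DCA \subsetneq \NCA$ shown in \cite{Ch84}.''
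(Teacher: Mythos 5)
Your proposal is correct and matches the paper's own justification exactly: the paper likewise obtains the corollary by combining the separation $\DCA \subsetneq \NCA$ from \cite{Ch84} with Theorem \ref{thm:2pca-simulates-2nca}, noting that one-sided bounded error is a special case of bounded error. Nothing further is needed.
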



Now, we turn our attention to the language recognition power of bounded-error \qcca's. We begin with the definitions of two languages: $ \twin = \{ u \# u \mid u \in \{a,b\}^{*} \} $ and $ \existentialtwin = \{ u \# v_1 \# \cdots \# v_k \mid k \geq 1, u \in \{a,b\}^{*}, v_{i} \in \{a,b\}^{*}  (1 \leq i \leq k), \mbox{and } \exists i \in \{ 1,\ldots,k \} (u = v_i ) \}  $.

{\v D}uri{\v s} and Galil \cite{DG81,DG82} showed that $ \existentialtwin $ cannot be recognized by any \dca. Moreover, Chrobak stated \cite{Ch84} that $ \existentialtwin $ does not seem to be in $ \NCA $. We show that  \mbox{\qcca}'s can recognize $ \existentialtwin $ for any error bound by using a new technique which calls a \qcfa's as a black box. In the next section, we will also show that this programming technique can also be used by \qcfa's having a pebble. 

\begin{theorem}
	\label{thm:existential-twin}
	$ \existentialtwin $ can be recognized by a \mbox{\qcca} $ \mathcal{Q} $ with bounded error.
\end{theorem}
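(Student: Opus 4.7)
The plan is to build $\mathcal{Q}$ so that it iterates over the blocks $v_1,\ldots,v_k$ and, for each $j$, invokes a \qcfa as a black-box subroutine that decides $u \stackrel{?}{=} v_j$. The counter is used only for navigation and counting, not during the quantum equality check itself; this is what allows a classical counter to be enough.

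First, I would rely on a \qcfa $\mathcal{Q}_{\twin}$ of the Ambainis--Watrous type that recognizes $\twin = \{ u \# u : u \in \{a,b\}^{*}\}$ with bounded one-sided error: $\mathcal{Q}_{\twin}$ accepts every $u\#u$ with probability $1$ and rejects every $u\#v$ with $u\neq v$ with probability at least some fixed constant $\delta>0$. This subroutine scans left-to-right and uses no counter. Next, on input $w = u \# v_1 \# \cdots \# v_k$, the verifier $\mathcal{Q}$ would first sweep $w$ once to load $k$ onto the counter. Then, for each $j=1,2,\ldots,k$, it simulates $\mathcal{Q}_{\twin}$ on the \emph{virtual} input $u\#v_j$ as follows: it positions its head at the start of $u$ and runs $\mathcal{Q}_{\twin}$'s transitions while scanning $u$; upon reading the first $\#$, it uses the counter to skip past $v_1\# v_2\#\cdots\# v_{j-1}\#$ and repositions at the start of $v_j$; then it continues feeding $v_j$ to the simulated $\mathcal{Q}_{\twin}$ and treats the following $\#$ (or $\dollar$) as the right end-marker. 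Because $\mathcal{Q}_{\twin}$ does not touch the counter, the counter is free throughout each simulated call.

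To combine the $k$ checks into a single bounded-error decision, I would use a restart-based amplification. In each round, $\mathcal{Q}$ samples a random $j\in\{1,\ldots,k\}$ (using the counter to pick uniformly), performs a block of repeated $\mathcal{Q}_{\twin}$ simulations on $u\#v_j$, and then, based on the aggregated outcome, either accepts with a small fixed probability, rejects with a small fixed probability, or restarts. If $w\in\existentialtwin$ there is some $j^{*}$ with $u=v_{j^{*}}$, and whenever this $j^{*}$ is drawn the block of simulations accepts with probability $1$; if $w\notin\existentialtwin$ the block accepts with probability at most some $\epsilon$, which can be driven below $1/(2k)$ by running enough repetitions of $\mathcal{Q}_{\twin}$ in the block. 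Tuning the fixed accept/reject probabilities of the restart scheme then separates the overall acceptance probabilities in the two cases, giving a bounded error strictly below $1/2$, which can be pushed to any desired value by standard probability amplification.

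The main obstacle is the simultaneous use of the counter for (i) locating the $j$-th block and (ii) providing enough repetitions so that a union bound over the $k$ candidate blocks does not defeat the per-block confidence. This is resolved by exploiting the fact that $\mathcal{Q}_{\twin}$ is itself counter-free: every call releases the counter, so the same single counter can be time-multiplexed between navigation between rounds and repetition-counting within rounds. This counter-reuse on top of a counter-free \qcfa subroutine is precisely the black-box programming technique that the statement advertises, and it is what makes $\existentialtwin$ recognizable with bounded error even though it does not appear to lie in $\NCA$.
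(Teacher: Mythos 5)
Your ingredients are the right ones---the counter-free \mbox{\qcfa} $\mathcal{Q}_{\twin}$ for $\twin$ used as a black box, with the counter reserved for locating the $j$-th block---but your combination step has a genuine gap. Because you \emph{sample} $j$ uniformly, a member $u\#v_1\#\cdots\#v_k$ lands on a non-matching block with probability $(k-1)/k$ and can then be rejected; balancing this against soundness forces the per-block false-accept probability $\epsilon$ below roughly $1/k$ no matter how you tune the fixed accept/reject probabilities, exactly as you note. But achieving $\epsilon<1/(2k)$ needs $\Theta(\log k)$ \emph{counted} repetitions of $\mathcal{Q}_{\twin}$, and the single counter cannot hold the repetition count and $j$ simultaneously. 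Your claim that ``every call releases the counter'' is where this breaks: $\mathcal{Q}_{\twin}$ is a left-to-right scanner that internally restarts an unbounded number of times before halting, so \emph{every} internal pass must re-find $v_j$ after reading $u\#$, i.e.\ the counter is pinned to (a value encoding) $j$ for the entire duration of a single call, not merely ``between rounds.'' A one-counter machine with finite control cannot store two input-dependent integers, so the time-multiplexing you invoke does not go through.

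The fix---and what the paper actually does---is to drop the random choice of $j$ and exploit the \emph{negative one-sided} error of $\mathcal{Q}_{\twin}$ in the combination step, rather than treating it as a generic bounded-error test and paying a union bound over $k$. Iterate $j=1,\ldots,k$ sequentially; run $\mathcal{Q}_{\twin}$ on the virtual input $u\#v_j$ until it halts; if it ever accepts, leave the loop and accept with probability $(1/5)^k$ (implementable by one biased coin flip per block during a single sweep, no counter needed), otherwise restart the whole round; reject only if all $k$ consecutive calls return ``reject.'' Since $\mathcal{Q}_{\twin}$ never rejects a true match, a member can never complete all $k$ rejections, so it is rejected with probability $0$ and hence accepted with probability $1$ over the restarts; a non-member is rejected with probability at least $(4/5)^k$ per round against an acceptance probability of at most $(1/5)^k$, a ratio of $4^k$, which gives bounded (indeed one-sided) error with no repetition counting and no dependence of any subroutine's error on $k$.
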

\begin{proof}
	Recently, Yakary{\i}lmaz and Say \cite{YS10A,YS10B} showed that $ \twin $ can be recognized by any \mbox{\qcfa} for any negative one-sided error bound. Let $ \mathcal{Q}_{\twin} $ be such a \mbox{\qcfa} for error bound $ \frac{1}{5} $. (We also refer the reader to Appendix \ref{app:a-2qcfa-for-twin-language} for the details of $ \mathcal{Q}_{\twin} $.) We will use  $ \mathcal{Q}_{\twin} $ as a black box. As a special remark, $ \mathcal{Q}_{\twin} $ reads the input from left to right in an infinite loop. 
	
	Let $ x $ be an input. We assume that $ x $ is of the form $ u \# v_1 \# \cdots \# v_k $ for some $ k \geq 1 $, where $ u,v_i \in \{a,b\}^{*} $ $ (1 \leq i \leq k) $. Otherwise, it is deterministically rejected. The idea behind the algorithm is that $ \mathcal{Q} $ selects $ v_1 $, and then simulates $ \mathcal{Q}_{\twin} $ by feeding $ u\#v_1 $ as the input. $ \mathcal{Q}_{\twin} $ gives the decision of rejection only if $ u \neq v_1 $. So, whenever $ \mathcal{Q}_{\twin} $ gives the decision of rejection, then $ \mathcal{Q} $ continues by selecting $ v_2 $, and so on.  We call each such selection, in which $ \mathcal{Q} $ gives the decision of rejection,  \textit{completed}. If $ x \notin \existentialtwin $, then $ \mathcal{Q} $ can obtain $ k $ completed-selection with some nonzero probability. Otherwise, this probability becomes zero since $ \mathcal{Q} $ can obtain at most $ (k-1) $ completed-selection. So, by accepting the input with some carefully tuned probability, we can obtain the desired machine. Note that $ \mathcal{Q} $ always remembers its selection by using its counter. The pseudocode of the algorithm is given below.
\begin{equation*}
	\label{program:existentialtwin}
	\mbox{
		\begin{minipage}{0.9\textwidth}
			\footnotesize
			FOR $ i = 1 $ TO $ k $ ($ v_i $ is selected)
				\\ \hspace*{15pt}
					RUN $ \mathcal{Q}_{\twin} $ on $ x' = u \# v_{i} $
					\\ \hspace*{30pt}					
					IF $ \mathcal{Q}_{\twin} $ \underline{accepts} $ x' $ THEN \textbf{TERMINATE} FOR-LOOP
					\\ \hspace*{30pt}
					IF $ \mathcal{Q}_{\twin} $ \underline{rejects} $ x' $ AND $ i=k $ THEN \textbf{REJECT} the input
			\\
			END FOR
			\\
			\textbf{ACCEPT} $ x $ with probability $ \left( \frac{1}{5} \right)^{k} $
			\\
			\textbf{RESTART} the algorithm
		\end{minipage}			
	}
\end{equation*}
As can be seen from the pseudocode, the algorithm is actually executed in an infinite loop. We begin with analysing a single round of the algorithm. It is straightforward that if $ x \in \existentialtwin $, the input is accepted with probability $ \left( \frac{1}{5} \right)^{k} $, and it is rejected with zero probability. If $ x \notin \existentialtwin $, $ \mathcal{Q}_{\twin} $ halts with the decision of rejection with a probability at least $ \frac{4}{5} $ in each iteration of the for-loop, and so the input is rejected with a probability at least $ \left( \frac{4}{5} \right)^{k} $, and it is accepted with a probability no more than $ \left( \frac{1}{5} \right)^{k} $. Therefore, the members are accepted exactly. Since the rejecting probability is at least $ 4^{k} $ times bigger than the accepting probability in a single round, the input is rejected with a probability at least $ \frac{4}{5} $. The error bound can be easily reduced to any desired value.
\end{proof}

Note that since PTM's cannot recognize $ \twin  $ in sublogarithmic space \cite{YFSA12A}, we cannot use the same idea for \pca's. In fact, we believe that $ \existentialtwin \notin \PCA $.

Another interesting language is a unary one: $ \usquare = \{ b^{n^{2}} \mid n \geq 1 \} $. It is still not known whether $ \usquare $ can be recognized by \dca's \cite{Pet94,Pet12A}. Since \qcfa's can recognize $ \squarelanguage = \{ a^{n}b^{n^{2}} \mid n \geq 1 \} $ with negative one-sided bounded error \cite{YS10B}, we can also obtain the following result. 
\begin{theorem}
	$ \usquare $ can be recognized by a \mbox{\qcca} with bounded error.
\end{theorem}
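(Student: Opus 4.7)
The plan is to apply the black-box simulation idea of Theorem~\ref{thm:existential-twin} with $\mathcal{Q}_{\twin}$ replaced by the \mbox{\qcfa} of~\cite{YS10B} recognizing $\squarelanguage=\{a^{n}b^{n^{2}}\mid n\ge 1\}$. Fix such a $\mathcal{Q}_{\squarelanguage}$ with negative one-sided error $1/5$: on members it always accepts, and on non-members it rejects with probability at least $4/5$. After deterministically rejecting inputs outside $b^{+}$, suppose the input is $b^{m}$; then $b^{m}\in\usquare$ if and only if $a^{n}b^{m}\in\squarelanguage$ for some $n\in\{1,\dots,m\}$. My \mbox{\qcca} $\mathcal{Q}$ will therefore iterate over these candidate $n$'s, and in each round simulate $\mathcal{Q}_{\squarelanguage}$ on the virtual input $a^{n}b^{m}$, using its own counter to manufacture the phantom prefix $a^{n}$ and its input head to present the real suffix $b^{m}$.

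The central implementation point is keeping the value of $n$ alive from one iteration to the next despite having only a single counter. I would maintain the invariant ``head on $\cent$, counter value~$n$'' at the start of iteration $n$, and carry out one iteration as follows. \emph{(i)} Feed $\cent$ to $\mathcal{Q}_{\squarelanguage}$, then repeatedly feed an $a$, step the head right, and decrement the counter, until the counter is $0$: exactly $n$ copies of $a$ have been fed, and the head now sits on position $n+1$ (the $n$-th $b$), which serves as a positional backup of $n$. \emph{(ii)} Step the head left to position $2$ while incrementing, so the counter contains $n-1$; then sweep the head rightward feeding a $b$ at every $b$-cell until $\dollar$, and feed $\dollar$ to $\mathcal{Q}_{\squarelanguage}$. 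Exactly $m$ $b$'s have been fed. \emph{(iii)} Inspect $\mathcal{Q}_{\squarelanguage}$'s verdict: an accept jumps out of the loop to the acceptance gadget, a reject advances the iteration. \emph{(iv)} Step the head back to $\cent$ without touching the counter, then add $2$ to the counter, so that the invariant holds for $n+1$. Termination of the inner loop is detected when step~\emph{(i)} tries to step the head off $\dollar$, which means $n>m+1$; in that case the round rejects the input.

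The surrounding scaffolding is taken directly from Theorem~\ref{thm:existential-twin}: the acceptance gadget ``accept with probability $(1/5)^{m}$, else restart'' will be realized by $m$ biased coin flips taken during one sweep of the real tape, and a tiny probabilistic restart on every step will guarantee halting with probability~$1$; the whole algorithm runs inside an outer restart loop. If $m=n_{0}^{2}$, then iteration $n_{0}$ gives a guaranteed accept-verdict, so every round reaches the gadget and the input is eventually accepted with probability~$1$, while no round ever rejects; if $m$ is not a perfect square, then $\mathcal{Q}_{\squarelanguage}$ rejects in every iteration with probability at least $4/5$, so the round rejects the input with probability at least $(4/5)^{m+1}$, whereas the cumulative probability that a false accept surfaces and then survives the $(1/5)^{m}$ gadget is at most $(1/5)^{m}$; the resulting ratio of order $4^{m}$ gives bounded error, amplifiable to any desired value by standard means. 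The hardest step, and the one I would spend most care on, is the counter orchestration in items~\emph{(i)}--\emph{(iv)}, since the single counter must serve concurrently as the source of $a^{n}$ for $\mathcal{Q}_{\squarelanguage}$, as a ruler between the real $b$-cells, and as the register carrying $n$ forward between iterations; the offload-to-head-position trick of step~\emph{(i)} is what makes all three roles coexist.
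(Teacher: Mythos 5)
Your proposal is correct and follows essentially the same route as the paper: iterate $n=1,\dots,|x|$, use the counter (with the head position as a ruler) to feed the virtual input $a^{n}b^{|x|}$ to the negative one-sided-error \mbox{\qcfa} for $\squarelanguage$ as a black box, and wrap this in the accept-with-tiny-probability/restart scaffolding of Theorem~\ref{thm:existential-twin}. The only difference is that you spell out the counter bookkeeping between iterations, which the paper leaves implicit.
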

\begin{proof}
	Let $ \mathcal{Q}_{\squarelanguage} $ be a \mbox{\qcfa} recognizing $ \squarelanguage $ with error bound $ \frac{1}{3} $. A \mbox{\qcca} can iteratively ($ i = 1,\ldots,|x| $) splits the input, say $ x $, as $ b^{i}b^{|x|-i} $ by using its counter, and then can feed $ x' = a^{i}b^{|x|} $ to $ \mathcal{Q}_{\squarelanguage} $. The pseudocode of the algorithm is given below.
	\begin{equation*}
	\label{program:usquare}
	\mbox{
		\begin{minipage}{0.9\textwidth}
			\footnotesize
			FOR $ i = 1 $ TO $ |x| $ 
				\\ \hspace*{15pt}
					RUN $ \mathcal{Q}_{\squarelanguage} $ on $ x' = a^{i}b^{|x|} $
					\\ \hspace*{30pt}					
					IF $ \mathcal{Q}_{\squarelanguage} $ \underline{accepts} $ x' $ THEN \textbf{TERMINATE} FOR-LOOP
					\\ \hspace*{30pt}
					IF $ \mathcal{Q}_{\squarelanguage} $ \underline{rejects} $ x' $ AND $ i=|x| $ THEN \textbf{REJECT} the input
			\\
			END FOR
			\\
			\textbf{ACCEPT} $ x $ with probability $ \left( \frac{1}{3} \right)^{2|x|} $
			\\
			\textbf{RESTART} the algorithm
		\end{minipage}			
	}
\end{equation*}
All the remaining details including the analysis of the algorithm is similar the algorithm given in the proof of Theorem \ref{thm:existential-twin}.
\end{proof}

\section{Pebble automata}
\label{sec:pebble-automata}

A 1-pebble finite automaton has the capability of placing a pebble to at most one tape square, of sensing whether a tape square has a pebble or not, and removing the pebble from the marked tape square. The algorithms given for \mbox{\qcca}'s in the previous section can also be implemented by 1-pebble \qcfa. In these algorithms, the counter is actually used to remember some positions on the input. A pebble can also be used in the same way by marking those positions. Therefore, we can conclude the following corollaries.

\begin{corollary}
	$ \existentialtwin $ and $ \usquare $ can be recognized by some 1-pebble \mbox{\qcfa}'s with bounded error.
\end{corollary}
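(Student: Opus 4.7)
The plan is to observe that in the two 2qcca algorithms from the previous section, the classical counter serves exactly one purpose: it marks a position in the input that identifies either the current candidate block $v_i$ (for EXISTENTIAL-TWIN) or the current split point $i$ (for USQUARE). A single pebble placed on the input tape stores precisely the same information, so the desired 1-pebble \qcfa's are obtained by translating each counter operation into the corresponding pebble operation, while keeping the black-box calls to $\mathcal{Q}_{\twin}$ and $\mathcal{Q}_{\squarelanguage}$ untouched.

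For EXISTENTIAL-TWIN I would first classically check that the input has the form $u\#v_1\#\cdots\#v_k$ and then place the pebble on the first $\#$. In the $i$-th iteration of the for-loop, the pebble sits on the $\#$ immediately preceding $v_i$, and the black box $\mathcal{Q}_{\twin}$ is simulated on the virtual string $u\#v_i$: whenever $\mathcal{Q}_{\twin}$ asks for its next symbol, the verifier sweeps its head from $\cent$ to the first $\#$ to supply the prefix $u\#$, and then from the pebble's cell to the next $\#$ or $\dollar$ to supply $v_i$, looping back at the virtual endmarkers as $\mathcal{Q}_{\twin}$ demands. When $\mathcal{Q}_{\twin}$ halts with rejection, a short classical subroutine advances the pebble to the next $\#$ and the loop continues; the acceptance probability $(1/5)^k$, the restart step, and the error analysis transfer verbatim from Theorem \ref{thm:existential-twin}.

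For USQUARE the translation is even simpler: after classically verifying that the input lies in $b^{+}$, the pebble is placed on the first cell and thereafter marks the current value of $i$. Simulating $\mathcal{Q}_{\squarelanguage}$ on the virtual string $a^{i}b^{|x|}$ reduces to feeding one $a$ per step from $\cent$ to the pebble, followed by one $b$ per step from $\cent$ to $\dollar$; incrementing $i$ is a single cell of pebble motion to the right, and the accept/reject probabilities are tuned exactly as in the USQUARE theorem.

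The only mildly delicate point is ensuring that the black-box subroutines never observe a stray real $\cent$ or $\dollar$ during their own infinite left-to-right sweeps: since both $\mathcal{Q}_{\twin}$ and $\mathcal{Q}_{\squarelanguage}$ are built around endmarker-bounded sweeps, the verifier must synthesize the virtual endmarkers at precisely the right moments using its finite classical state. This is routine finite-state bookkeeping, requires no additional quantum or probabilistic machinery, and leaves the pebble undisturbed during each call to the black box. Because the per-round distributions over accept, reject, and restart outcomes are identical to those of the 2qcca algorithms, the bounded error carries over without change.
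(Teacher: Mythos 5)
Your proposal is correct and matches the paper's argument exactly: the paper also observes that the counter in the \mbox{\qcca} algorithms serves only to remember a position on the input, which a single pebble can mark instead, leaving the black-box calls to $\mathcal{Q}_{\twin}$ and $\mathcal{Q}_{\squarelanguage}$ and the error analysis unchanged. Your additional bookkeeping details about synthesizing virtual endmarkers are sound and merely make explicit what the paper leaves implicit.
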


By using the same idea, we can also show that $ \siamtwins = \{uu \mid u \in\{a,b\}^{*} \} $ can also be recognized by 1-pebble \qcfa's. This is an interesting language since $ \siamtwins $ cannot be recognized by any 1-pebble NTM using sublogarithmic space \cite{IIIO05}.

\begin{theorem}
	$ \existentialtwin $ can be recognized by a 1-pebble \mbox{\qcfa} $ \mathcal{Q} $ with bounded error.
\end{theorem}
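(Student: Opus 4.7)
The plan is to adapt the qcca algorithm from the proof of Theorem \ref{thm:existential-twin}, letting the single pebble take over the role the counter played there. Specifically, the pebble will always mark the first symbol of the current candidate block $v_i$, so ``advance to iteration $i+1$'' becomes ``slide the pebble right to the symbol immediately after the next $\#$.'' As a preprocessing pass, the qcfa sweeps the input left to right and, using only finite control, verifies it has the form $u \# v_1 \# \cdots \# v_k$ (rejecting otherwise) and then places the pebble on the first symbol of $v_1$.

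The core step is simulating $\mathcal{Q}_{\twin}$ on the virtual input $\cent u \# v_i \$$ using only the real input $\cent u \# v_1 \# \cdots \# v_k \$$ together with the pebble. I would re-use $\mathcal{Q}_{\twin}$'s quantum register as the qcfa's own register and enlarge the classical state set with a flag telling whether $\mathcal{Q}_{\twin}$'s simulated head currently lies in the $u$-region or the $v_i$-region of the virtual tape. Head motion inside a region is reproduced one-for-one. At the two boundary cases the real head teleports: on a rightward step off the unique separator $\#$ at position $|u|+2$, the real head scans right until it finds the pebble; on a leftward step off the first symbol of $v_i$, the real head scans left back to that same $\#$. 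The virtual $\$$ is synthesized whenever the real head is at the cell immediately past $v_i$, which is either the next $\#$ (when $i<k$) or the actual $\$$ (when $i=k$); both are treated as $\$$ for the simulated automaton. Since $\mathcal{Q}_{\twin}$ reads in an infinite left-to-right loop and halts with probability $1$, each simulated call terminates.

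The surrounding control flow is then a direct transcription of the pseudocode in the proof of Theorem \ref{thm:existential-twin}: if the simulation accepts, the for-loop ends and the qcfa proceeds to the probabilistic-acceptance step, which I would implement by one further left-to-right sweep performing an independent biased coin flip with acceptance probability $1/5$ at each $\#$ past $u$ and restarting if any flip fails, so that the cumulative acceptance probability in a round is exactly $(1/5)^k$; if the simulation rejects and the pebble marks the start of $v_k$, detected by scanning right from the pebble and seeing $\$$ before the next $\#$, the input is rejected; otherwise the pebble is advanced to the start of $v_{i+1}$ and the loop continues. Because every quantitative quantity in the analysis depends only on $k$ and on the one-sided error $1/5$ of $\mathcal{Q}_{\twin}$, the error calculation copies that of Theorem \ref{thm:existential-twin} verbatim, and the error bound can be driven arbitrarily low by the usual amplification.

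The main obstacle is the teleportation routine. A 1-pebble qcfa is a finite-state object, so during a boundary crossing it must suspend the simulated step of $\mathcal{Q}_{\twin}$, use a disjoint constant-size bank of classical states to walk the real head between position $|u|+2$ and the pebble without disturbing the quantum register, and then resume $\mathcal{Q}_{\twin}$'s simulation in exactly the classical and quantum state it had just before the scan. This requires committing a small number of auxiliary states per simulated classical state of $\mathcal{Q}_{\twin}$, but the construction is entirely finite-state and therefore fits within the 1-pebble qcfa model.
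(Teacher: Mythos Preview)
Your proposal is correct and follows exactly the approach the paper intends for $\existentialtwin$: replace the counter of the \qcca\ algorithm in Theorem~\ref{thm:existential-twin} by a pebble that marks the first symbol of the current candidate $v_i$, then reuse $\mathcal{Q}_{\twin}$ as a black box on the virtual input $u\#v_i$. Your handling of the teleportation and of the $(1/5)^k$ acceptance step is sound (and in fact only the left-to-right sweep of $\mathcal{Q}_{\twin}$ is needed, so your leftward-boundary case is harmless over-engineering).

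One point worth flagging: the proof that actually appears in the paper beneath this theorem statement is \emph{not} a proof for $\existentialtwin$ at all---it is a proof for $\siamtwins=\{uu\mid u\in\{a,b\}^*\}$; the theorem statement evidently contains a typo. The paper's treatment of $\existentialtwin$ with a 1-pebble \qcfa\ is the one-line remark preceding the Corollary (``the counter is actually used to remember some positions on the input; a pebble can also be used in the same way''). Your write-up therefore supplies the details the paper omits for $\existentialtwin$, while the paper's displayed proof handles the harder case of $\siamtwins$, where there is no explicit separator and the pebble must be slid across \emph{all} candidate midpoints.
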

\begin{proof}
	Let $ x $ be a given input. If $ x $ is not of the form $ ax_1ax_2 $ or $ bx_1bx_2 $, then it is rejected immediately, where $ x_1,x_2 \in \{a,b\}^{*} $. Assume that $ x = ax_1ax_2 $. (The other case is the same). We will use the \mbox{\qcca} algorithm given for $ \existentialtwin $ after some modifications. We give the the pseudocode of the algorithm. 
\begin{equation*}
	\label{program:siamtwins}
	\mbox{
		\begin{minipage}{0.9\textwidth}
			\footnotesize
			BEGIN (OUTER-)LOOP
				\\ \hspace*{15pt}
				TRY to MARK the next $ a $ on $ x $
					\\ \hspace*{30pt}
					IF there is no such $ a $, THEN \textbf{REJECT} $ x $
					\\ \hspace*{30pt}
					ELSE $ x = ax_1 a' x_2 $ ($ a' $ is the marked one)
				\\ \hspace*{15pt}
				BEGIN (INNER-)LOOP
					\\ \hspace*{30pt}
					RUN $ \mathcal{Q}_{\twin} $ on $ x' = x_1 \# x_2 $
					\\ \hspace*{45pt}					
					IF $ \mathcal{Q}_{\twin} $ \underline{accepts} $ x' $ THEN \textbf{TERMINATE} (OUTER-)LOOP	
					\\ \hspace*{45pt}			
					IF $ \mathcal{Q}_{\twin} $ \underline{rejects} $ x' $ 
						THEN \textbf{TERMINATE} (INNER-)LOOP
				\\ \hspace*{15pt}
				END (INNER-)LOOP
			\\
			END (OUTER-)LOOP
			\\
			\textbf{ACCEPT} $ x $ with probability $ \left( \frac{1}{5} \right)^{|x|} $
			\\
			\textbf{RESTART} the algorithm
		\end{minipage}			
	}
\end{equation*}
All the remaining details of the algorithm are similar to the previous ones.
\end{proof}

Ravikumar \cite{Rav07} showed that 1-pebble \pfa's are more powerful than \mbox{\pfa}'s in the unbounded error case by giving an unbounded-error  1-pebble \mbox{\pfa} for nonstochatic language $ \centerlanguage = \{ ubv \mid u,v \in \{a,b\}^{*} \mbox{ and } |u|=|v| \} $. We show the same separation for the bounded-error machines as conjectured by Ravikumar \cite{Rav92}. (Note that bounded-error 1-pebble \mbox{\qcfa}'s can recognize $ \centerlanguage $. However, we do not know such a \mbox{\qcfa} for nonstochastic language $ \say = \{ x \mid \exists x_{1},x_{2}, y_{1}, y_{2} \in \{a,b\}^{*}, x=x_{1}bx_{2}=y_{1}by_{2}, |x_{1}| = |y_{2}| \} $ \cite{FYS10A}.)

\begin{theorem}
	1-pebble \mbox{\pfa}'s are more powerful than \mbox{\pfa}'s in the bounded error case.
\end{theorem}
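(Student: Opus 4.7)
The plan is to exhibit one concrete witness language $L$ that lies in bounded-error 1-pebble $\pfa$ but in no bounded-error $\pfa$. The natural candidate, flagged in the paragraph immediately above the statement, is Ravikumar's center language
\[
\centerlanguage = \{ubv \mid u,v \in \{a,b\}^{*},\ |u|=|v|\},
\]
for which \cite{Rav07} already gave an unbounded-error 1-pebble $\pfa$ and which is known to be nonstochastic.

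For the upper bound I would reuse the ``call a subroutine in a for-loop'' template of Theorem~\ref{thm:existential-twin}, but with a bounded-error $\pfa$ subroutine in place of a \qcfa\ subroutine. Let $\mathcal{M}_{\mathrm{eq}}$ be Freivalds' classical bounded-error $\pfa$ for $\{a^{n}b^{n} \mid n \geq 1\}$ with one-sided error, say, $\tfrac{1}{3}$. Given an input $x$ on the 1-pebble $\pfa$, iterate $i=1,\ldots,|x|$, placing the pebble on position $i$; if $x_i \neq b$ advance to the next $i$; otherwise simulate $\mathcal{M}_{\mathrm{eq}}$ on the virtual input obtained by relabelling every tape cell to the left of the pebble as $a$ and every cell to the right as $b$, so that $\mathcal{M}_{\mathrm{eq}}$ accepts exactly when $|u|=|v|$ for the split induced by the pebble. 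If $\mathcal{M}_{\mathrm{eq}}$ accepts, exit the loop; after the loop, accept with probability $(\tfrac13)^{|x|}$ and otherwise restart. A round-by-round analysis of the same shape as for $\existentialtwin$ and $\usquare$ then shows that members are accepted with strictly positive probability and zero per-round rejection, while for nonmembers the per-round rejection probability is exponentially larger than the per-round acceptance probability, yielding bounded one-sided error after amplification.

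For the lower bound I would invoke the inclusion that every language recognized by a $\pfa$ with bounded error is stochastic in the Rabin/Turakainen sense, combined with the nonstochasticity of $\centerlanguage$ recorded by Ravikumar; together these rule out any bounded-error $\pfa$ for $\centerlanguage$. The inclusion reduces to the fact that the acceptance probability of a bounded-error $\pfa$ on $x$ can be written as a rational expression in the fixed transition matrices indexed by $x$, which is precisely the form that isolating a cut point on a suitably chosen 1pfa captures.

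The main obstacle I anticipate is making the inclusion ``bounded-error $\pfa\subseteq$ stochastic'' rigorous in the two-way setting: bounding the behaviour of a two-way $\pfa$ by a one-way probabilistic machine with an isolated cutpoint requires some care about expected running time and halting probability, and this is precisely the step where previous conjecturers have hesitated. If the inclusion cannot be pinned down cleanly in the form required, a fallback is to substitute the language $\say$ that the paper already explicitly identifies as nonstochastic: the same pebble-plus-$\mathcal{M}_{\mathrm{eq}}$ template, now iterating over candidate symmetric pairs of $b$'s instead of a single central $b$, gives the upper bound, and any direct combinatorial ``two merged inputs differ only in the position of the critical $b$'' argument on the finite state set of a $\pfa$ will supply the lower bound.
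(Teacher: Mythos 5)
Your lower-bound strategy---exhibit a nonstochastic witness and invoke the inclusion of bounded-error \mbox{\pfa} languages in the stochastic languages---is exactly the one the paper uses, so that half is fine modulo the same background fact the paper itself leans on. The genuine gap is in your upper bound. The for-loop/restart template of Theorem~\ref{thm:existential-twin} works only because the subroutine $\mathcal{Q}_{\twin}$ has \emph{negative one-sided} error: on a true witness it never rejects, so for members the per-round rejection probability is exactly zero and the tiny per-round acceptance probability wins. Freivalds' \mbox{\pfa} for $\{a^{n}b^{n}\}$ is a two-sided statistical test; you would need a \mbox{\pfa} that accepts every $a^{n}b^{n}$ with probability exactly $1$, which is not what Freivalds provides and which you have not constructed. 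With a two-sided subroutine the analysis breaks for members of $\centerlanguage$: in a single round the equality test at the true center itself rejects with some constant probability $\delta>0$, so the per-round rejection probability is on the order of $(2/3)^{k-1}\delta$ (where $k$ is the number of positions carrying a $b$), which dwarfs the per-round acceptance probability $(1/3)^{|x|}$; the machine therefore rejects members with probability close to $1$. The same objection applies to your $\say$ fallback, which in addition seems to need two marked positions rather than one pebble.

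The paper sidesteps this trap by choosing a witness for which the pebble drives a \emph{deterministic} input transformation rather than an existential search. It takes $\lapins = \{a^{m}b^{n}c^{p} \mid m^{4}>n^{2}>p>0\}$, nonstochastic by Lapin\v{s}, and reduces it to $\greatersquare=\{a^{m}b^{n}\mid m>n^{2}>0\}$: the pebble lets the verifier scan the $b$-block $n$ times, i.e.\ present the virtual input $a^{m}b^{n^{2}}$ to Freivalds' bounded-error \mbox{\pfa} for $\greater$. The transformation is exact, the subroutine is invoked once, and so the two-sidedness of Freivalds' test is harmless and no error accumulates. If you want to keep $\centerlanguage$ as the witness you must either produce a genuinely one-sided equality tester or locate the center deterministically with the pebble before testing; as written, the construction does not establish the upper bound.
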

\begin{proof}
	Lapin\v{s} \cite{Lap74} showed that language $ \lapins = \{ a^{m}b^{n}c^{p} \mid m^{4} > n^{2} > p > 0 \} $ is a nonstochastic, i.e. not recognized by any bounded-error \mbox{\pfa}. It is not hard to show that $ \lapins $ can be recognized by a 1-pebble bounded-error \mbox{\pfa} if there exists a bounded-error 1-pebble \mbox{\pfa} for $ \greatersquare = \{ a^{m}b^{n} \mid m > n^{2} > 0 \} $. Therefore, it is sufficient to show that there exists a bounded-error 1-pebble \mbox{\pfa}, say $ \mathcal{P} $, for $ \greatersquare $.
	
	Let $ x $ be an input string of the form $ a^{m}b^{n} $ ($ m,n>0 $). By using its pebble, $ \mathcal{P} $ can feed $ a^{m}b^{n^{2}} $, i.e. it can $ n $-times read $ b^{n} $, to any bounded-error \mbox{\pfa}. Since language $ \greater = \{ a^{m}b^{n} \mid m > n > 0 \} $ can be recognized by \mbox{\pfa}'s with bounded error \cite{Fr81,Rav92}, $ \mathcal{P} $ can recognize $ \greatersquare $ with bounded error.
\end{proof}

\textbf{Acknowledgements.}
We would like to thank A. C. Cem Say not only for his useful comments on a draft of this paper but also for many helpful discussions on the subject matter of this paper; Juraj Hromkovi\v{c} and Holger Petersen for kindly answering our questions; and, Holger Petersen for giving an idea used in the proof of Lemma \ref{lemma:2nca-shortest-path}.

\newpage
\appendix

\section{The modes of language recognition with errors}
\label{app:language-recognition-mode}

A language $ \mathtt{L} $ is said to be recognized by a machine $ \mathcal{M} $ with error bound $ \epsilon < \frac{1}{2} $, if $ \mathcal{M} $ accepts each member of $ \mathtt{L} $ with a probability at least $ 1- \epsilon $, and $ \mathcal{M} $ rejects each non-member of $ \mathtt{L} $ with a probability at least $ 1- \epsilon $. A language $ \mathtt{L} $ is said to be recognized by a machine $ \mathcal{M} $ with bounded error if it is recognized by $ \mathcal{M} $ with an error bound.
		
A language $ \mathtt{L} $ is said to be recognized by a machine $ \mathcal{M} $ with (positive) one-sided error bound $ \epsilon < 1 $, if $ \mathcal{M} $ accepts each member of $ \mathtt{L} $ with a probability at least $ 1- \epsilon $, and $ \mathcal{M} $ rejects each non-member of $ \mathtt{L} $ with probability 1. A language $ \mathtt{L} $ is said to be recognized by a machine $ \mathcal{M} $ with (positive) one-sided bounded error if it is recognized by $ \mathcal{M} $ with a positive one-sided error bound.
		
A language $ \mathtt{L} $ is said to be recognized by a machine $ \mathcal{M} $ with negative one-sided error bound $ \epsilon < 1 $, if $ \mathcal{M} $ accepts each member of $ \mathtt{L} $ with probability 1, and $ \mathcal{M} $ rejects each non-member of $ \mathtt{L} $ with a probability at least $ 1- \epsilon $. A language $ \mathtt{L} $ is said to be recognized by a machine $ \mathcal{M} $ with negative one-sided bounded error if it is recognized by $ \mathcal{M} $ with a negative one-sided error bound.
		
A language $ \mathtt{L} $ is said to be recognized by a machine $ \mathcal{M} $ with unbounded-error if $ \mathcal{M} $ accepts each member of $ \mathtt{L} $ with a probability bigger than $ \frac{1}{2} $, and $ \mathcal{M} $ rejects each non-member of $ \mathtt{L} $ with a probability at most $ \frac{1}{2} $.
		
A language $ \mathtt{L} $ is said to be recognized by a machine $ \mathcal{M} $ with one-sided unbounded-error if $ \mathcal{M} $ accepts each member of $ \mathtt{L} $ with some nonzero probability, and $ \mathcal{M} $ rejects each non-member of $ \mathtt{L} $ with probability 1.

\section{A 2qcfa for TWIN language}
\label{app:a-2qcfa-for-twin-language}

In this section, we show that there exists a \qcfa, say $ \mathcal{Q}_{\twin} $, recognizing $ \twin $ with negative one-sided  error bound $ \frac{1}{5} $.  Note that this error bound can be easily be reduced to any desired value by using probability amplification techniques.

Let $ x \in \{a,b,\# \}^{*} $ be an input. If $ x $ does not contain exactly one $ \# $, then it is deterministically rejected. So, we assume that $ x = u_1 \# u_2 $ in the following part, where $ u_1,u_2 \in \{a,b\}^{*} $.

The quantum register of $ \mathcal{Q}_{\twin} $ has three states, i.e. $ q_1$, $q_2$, and $q_3 $. In an infinite loop, $ \mathcal{Q} $ reads the input from left to right with a speed of one symbol per step. We call each iteration \textit{a round}. During scanning the input, $ \mathcal{Q}_{\twin} $ is trying to encode $ u_1 $ and $ u_2 $ into the amplitudes of $ q_1 $ and $ q_2 $, respectively. Since the encoding techniques requires to reduce the amplitudes with a constant, the current round is terminated, and then a new round is initiated with some probability after reading each symbol. If $ \mathcal{Q} $ succeeds to reach $ \dollar $, then the amplitudes of $ q_1 $ and $ q_2 $ are subtracted, based on which the input is rejected, and the input is always accepted with a small probability. So, if $ u_1 = u_2 $, the input is only accepted in each round. Otherwise, the input is both accepted and rejected. By tuning the accepting probability sufficiently small than the minimum rejecting probability, we can obtain the desired machine. The technical details are given below.

We encode the strings in base-2 and use $ 0 $ for $ a $'s and $ 1 $ for $ b $'s. Since it can be easily understandable from the context, we will use string representation also for their encodings. The strings $ 0^{j_{1}}1u $ and $ 0^{j_{2}}1u $ can be different, but, their encodings are the same, where $ j_1 $ and $ j_2 $ are nonnegative integers. Therefore, we encode $ 1u_1 $ and $ 1u_2 $ instead of $ u_1 $ and $ u_2 $. Note that $ u_1 = u_2 $ if and only if $ 1u_1 = 1u_2 $.

At the beginning of each round, the quantum register is set to 
\[
	\ket{ \psi_0 } = \left( \begin{array}{c} 1 \\ 0 \\ 0 \end{array}	 \right).
\]
Until reading $ \dollar $ symbol, the round is terminated unless the outcome 1 is observed. In order to simplify the calculations, we trace the quantum part by unconditional state vectors.

After reading $ \cent $, the following superoperator is applied:
\[
	\mathcal{E}_{\cent} = 
	\left\lbrace 
		E_1 = \frac{1}{3} 
		\left(  \begin{array}{lll}
			1 & 0 & 0 \\
			1 & 0 & 0 \\
			1 & 0 & 0
		\end{array} \right),
		~~
		E_2 = \frac{1}{3} 
		\left(  \begin{array}{lll}
			1 & 0 & 0 \\
			1 & 0 & 0 \\
			2 & 0 & 0
		\end{array} \right),
		~~
		E_3 = \frac{1}{3} 
		\left(  \begin{array}{lll}
			0 & 0 & 0 \\
			0 & 3 & 0 \\
			0 & 0 & 3
		\end{array} \right)
	\right\rbrace.
\]
Then, the (unconditional) state vector becomes
\[
	\ket{ \widetilde{ \psi_1 } } = \frac{1}{3} \left( \begin{array}{c} 1 \\ 1 \\ 1 \end{array}	 \right).
\]
Thus, the first symbol of both $ 1u_1 $ and $ 1u_2 $ have been encoded.

Until reading symbol $ \# $, the remaining part of $ 1u_1 $ is encoded into the amplitude of $ q_1 $ by using the following two superoperators. The first (second) one is applied after reading an $ a $ (a $ b $).
\[
	\mathcal{E}_{a} = 
	\left\lbrace 
		E_1 = \frac{1}{3} 
		\left(  \begin{array}{lll}
			2 & 0 & 0 \\
			0 & 1 & 0 \\
			0 & 0 & 1
		\end{array} \right),
		~~
		E_2 = \frac{1}{3} 
		\left(  \begin{array}{lll}
			2 & 0 & 0 \\
			1 & 0 & 0 \\
			0 & 0 & 0
		\end{array} \right),
		~~
		E_3 = \frac{1}{3} 
		\left(  \begin{array}{rrr}
			0 & 2 & 2 \\
			0 & 2 & -2 \\
			0 & 0 & 0
		\end{array} \right)
	\right\rbrace.
\]
\[
	\mathcal{E}_{b} = 
	\left\lbrace 
		E_1 = \frac{1}{3} 
		\left(  \begin{array}{lll}
			2 & 0 & 1 \\
			0 & 1 & 0 \\
			0 & 0 & 1
		\end{array} \right),
		~~
		E_2 = \frac{1}{3} 
		\left(  \begin{array}{rrr}
			-1 & 0 & 2 \\
			2 & 0 & 0 \\
			0 & 0 & 1
		\end{array} \right),
		~~
		E_3 = \frac{1}{3} 
		\left(  \begin{array}{rrr}
			0 & 2 & 1 \\
			0 & -2 & 1 \\
			0 & 0 & 0
		\end{array} \right)
	\right\rbrace.
\]
Then, the (unconditional) state vector becomes
\[
	\ket{ \widetilde{ \psi_1 } } = \left( \frac{1}{3} \right)^{|\cent u_1|} \left( \begin{array}{c} 1u_1 \\ 1 \\ 1 \end{array}	 \right).
\]

After reading $ \# $, the following superoperator is applied:
\[
	\mathcal{E}_{\#} = 
	\left\lbrace 
		E_1 = \frac{1}{3} 
		\left(  \begin{array}{lll}
			1 & 0 & 0 \\
			0 & 1 & 0 \\
			0 & 0 & 1
		\end{array} \right),
		~~
		E_2 = \frac{1}{3} 
		\left(  \begin{array}{lll}
			2 & 0 & 0 \\
			0 & 2 & 0 \\
			0 & 0 & 2
		\end{array} \right),
		~~
		E_3 = \frac{1}{3} 
		\left(  \begin{array}{lll}
			2 & 0 & 0 \\
			0 & 2 & 0 \\
			0 & 0 & 2
		\end{array} \right)
	\right\rbrace.
\]
Then, the (unconditional) state vector becomes
\[
	\ket{ \widetilde{ \psi_{|\cent u_1\#|} } } = \left( \frac{1}{3} \right)^{|\cent u_1 \#|} \left( \begin{array}{c} 1u_1 \\ 1 \\ 1 \end{array}	 \right).
\]

Until reading symbol $ \dollar $, the remaining part of $ 1u_2 $ is encoded into the amplitude of $ q_2 $ by using the following two superoperators. The first (second) one is applied after reading an $ a $ (a $ b $).
\[
	\mathcal{E'}_{a} = 
	\left\lbrace 
		E_1 = \frac{1}{3} 
		\left(  \begin{array}{lll}
			1 & 0 & 0 \\
			0 & 2 & 0 \\
			0 & 0 & 1
		\end{array} \right),
		~~
		E_2 = \frac{1}{3} 
		\left(  \begin{array}{rrr}
			2 & 0 & 2 \\
			2 & 0 & -2 \\
			0 & 2 & 0
		\end{array} \right),
		~~
		E_3 = \frac{1}{3} 
		\left(  \begin{array}{rrr}
			0 & 1 & 0 \\
			0 & 0 & 0 \\
			0 & 0 & 0
		\end{array} \right)
	\right\rbrace.
\]
\[
	\mathcal{E'}_{b} = 
	\left\lbrace 
		E_1 = \frac{1}{3} 
		\left(  \begin{array}{lll}
			1 & 0 & 0 \\
			0 & 2 & 1 \\
			0 & 0 & 1
		\end{array} \right),
		~~
		E_2 = \frac{1}{3} 
		\left(  \begin{array}{rrr}
			0 & 1 & -2 \\
			2 & 0 & 1 \\
			-2 & 0 & 1
		\end{array} \right),
		~~
		E_3 = \frac{1}{3} 
		\left(  \begin{array}{lll}
			0 & 0 & 1 \\
			0 & 2 & 0 \\
			0 & 0 & 0
		\end{array} \right)
	\right\rbrace.
\]
Then, the (unconditional) state vector becomes
\[
	\ket{ \widetilde{ \psi_{|\cent u_1\#u_2|} } } = \left( \frac{1}{3} \right)^{|\cent u_1\#u_2|} \left( \begin{array}{c} 1u_1 \\ 1u_2 \\ 1 \end{array}	 \right).
\]

After reading $ \dollar $, the decision on the input is given by applying the following superoperatos. If outcome 1 is observed, then the input is rejected, if output 2 is observed, then the input is accepted, and a new round is initiated, otherwise.
\[
	\mathcal{E}_{\dollar} = 
	\mspace{-4mu}
	\left\lbrace 
		\mspace{-4mu}
		E_1 = \frac{1}{3} 
		\left(  \begin{array}{rrr}
			2 & -2 & 0 \\
			0 & 0 & 0 \\
			0 & 0 & 0
		\end{array} \right)\mspace{-6mu},
		E_2 = \frac{1}{3} 
		\left(  \begin{array}{rrr}
			0 & 0 & 0 \\
			0 & 0 & 0 \\
			0 & 0 & 1
		\end{array} \right)\mspace{-6mu},
		E_3 = \frac{1}{3} 
		\left(  \begin{array}{lll}
			2 & 2 & 0 \\
			1 & 0 & 0 \\
			0 & 1 & 0
		\end{array} \right)\mspace{-6mu},
	E_4 = \frac{1}{3} 
		\left(  \begin{array}{lll}
			0 & 0 & 2 \\
			0 & 0 & 2 \\
			0 & 0 & 0
		\end{array} \right)
		\mspace{-6mu}
	\right\rbrace.
\]
If output 1 is observed, then the (unconditional) state vector becomes
\[
	\left( \frac{1}{3} \right)^{|\tilde{x}|} \left( \begin{array}{c} 2(1u_1-1u_2) \\ 0 \\ 0 \end{array}	 \right).
\]
That is, if $ u_1=u_2 $, then the input is rejected with zero probability, and if $ u_1 \neq u_2 $, the input is rejected with probability 
\[
	4 \left( \frac{1}{3} \right)^{2|\tilde{x}|}  \left( 1u_1-1u_2 \right),
\]
which can be at least 
\[
	4 \left( \frac{1}{3} \right)^{2|\tilde{x}|}.
\]
If output 2 is observed, then the (unconditional) state vector becomes
\[
	\left( \frac{1}{3} \right)^{|\tilde{x}|} \left( \begin{array}{c} 0 \\ 0 \\ 1 \end{array}	 \right).
\]
That is, the input is always rejected with probability
\[
	\left( \frac{1}{3} \right)^{2|\tilde{x}|},
\]
which is 4 times smaller than the minimum nonzero rejecting probability.

So, if $ x \in \twin $, then it is accepted with probability 1, and if $ x \notin \twin $, then it is accepted with a probability at most $ \frac{1}{5} $, and rejected with a probability at least $ \frac{4}{5} $.

\newpage

\bibliographystyle{plain}
\bibliography{Yakaryilmaz}

\end{document}